\documentclass[5p,times,authoryear]{elsarticle}
%\documentclass[11pt,twocolumn]{elsarticle}

%\journal{TBA}
%%% standard pachages
\usepackage{amsthm, amsmath, amssymb, amsfonts, graphicx, epsfig}
\usepackage{accents}
%%%

%%%%%%%%%%%%%%%%%%%%%%% ADDITIONAL FONTS %%%%%%%%%%%%%%%%%%%%%%%%%%%%%%
% Package to make \mathbbm, in particular to have 1 as for indicator function
\usepackage{bbm}
% Package to make special curl fonts, by using \mathscr{F}
\usepackage{mathrsfs,dsfont}
\usepackage{multirow}
\usepackage{booktabs,nicefrac}
\usepackage{array}
\usepackage{caption}
\captionsetup[figure]{font=footnotesize}
\captionsetup[table]{font=footnotesize}
\captionsetup{width=0.45\textwidth}
%%%%%%%%%%%%%%%%%%%%%%%%%%%%%%%%%%%%%%%%%%%%%%%%%%%%%%%%%%%%%%%%%%%%%%%

%%%%%%%%%%%%%%%%%%%%%%%%%%%%%%%%%%%%%%%%%%%%%%%%%%%%%%%%%%%%%%%%%%%%%%%
%The ulem package provides various types of underlining that can
%stretch between words and be broken across lines
% http://ctan.unixbrain.com/macros/latex/contrib/ulem/ulem.pdf
%\usepackage{ulem}
\usepackage[normalem]{ulem}
% Cancel is used to crozz out in math mode. \xcancel, \cancel
% http://ctan.math.utah.edu/ctan/tex-archive/macros/latex/contrib/cancel/cancel.pdf
\usepackage{cancel}
%%%%%%%%%%%%%%%%%%%%%%%%%%%%%%%%%%%%%%%%%%%%%%%%%%%%%%%%%%%%%%%%%%%%%%%

%%%%%%%%%%%%%%%%%%%%%%%%%%%%%%%%%%%%%%%%%%%%%%%%%%%%%%%%%%%%%%%%%%%%%%%
%This packages adds support of handling eps images to package graphics
%or graphicx with option pdftex. If an eps image is detected, epstopdf is
%automatically called to convert it to pdf format.
\usepackage{epstopdf}
%%%%%%%%%%%%%%%%%%%%%%%%%%%%%%%%%%%%%%%%%%%%%%%%%%%%%%%%%%%%%%%%%%%%%%%

%%%%%%%%%%%%%%%%%%%%%%%%%%%%%%%%%%%%%%%%%%%%%%%%%%%%%%%%%%%%%%%%%%%%%%%
% various features for using graphics, including subfigure, captions, subcaptions etc.
\usepackage{graphicx}
\usepackage[font=sl,labelfont=bf]{caption}
\usepackage{subcaption}
%%%%%%%%%%%%%%%%%%%%%%%%%%%%%%%%%%%%%%%%%%%%%%%%%%%%%%%%%%%%%%%%%%%%%%%

%%%%%%%%%%%%%%%%%%%%%%%%%%%%%%%%%%%%%%%%%%%%%%%%%%%%%%%%%%%%%%%%%%%%%%%
% This package improves the interface for defining floating objects such
% as figures and tables in LaTeX.
% http://www.ctan.org/pkg/float
\usepackage{float}
\restylefloat{table}
%%%%%%%%%%%%%%%%%%%%%%%%%%%%%%%%%%%%%%%%%%%%%%%%%%%%%%%%%%%%%%%%%%%%%%%
%%% use for diagonals in the table's cells
%\usepackage{slashbox}
%%%%%%%%%%%%%%%%%%%%%%%%%%%%%%%%%%%%%%%%%%%%%%%%%%%%%%%%%%%%%%%%%%%%%%%

%%%%%%%%%%%%%%%%%%%%%%%%%%%%%%%%%%%%%%%%%%%%%%%%%%%%%%%%%%%%%%%%%%%%%%%
% This package gives the enumerate environment an optional argument
% which determines the style in which the counter is printed.
% http://www.ctex.org/documents/packages/table/enumerate.pdf
\usepackage{enumerate}
\usepackage[colorlinks=true, pdfstartview=FitV, linkcolor=blue,
            citecolor=blue, urlcolor=blue]{hyperref}
\usepackage[usenames]{color} 
%%%%%%%%%%%%%%%%%%%%%%%%%%%%%%%%%%%%%%%%%%%%%%%%%%%%%%%%%%%%%%%%%%%%%%%%

%%%%%%%%%%%%%%%%%%%%%%%%%%%%%%%%%%%%%%%%%%%%%%%%%%%%%%%%%%%%%%%%%%%%%%%
% It is a LaTeX package to act as generalized
% interface for standard and non-standard bibliographic style files (BibTeX).
% http://www.ctan.org/tex-archive/macros/latex/contrib/natbib/
\usepackage[round]{natbib}
%\usepackage{natbib}
%\usepackage[authoryear]{natbib} 

%\usepackage[authordate]{biblatex-chicago}
%\usepackage[style=apa]{biblatex}
%\addbibresource{bibliography.bib}

%%%%%%%%%%%%%%%%%%%%%%%%%%%%%%%%%%%%%%%%%%%%%%%%%%%%%%%%%%%%%%%%%%%%%%%

%%%%%%%%%%%%%%%%%%%%%%%%%%%%%%%%%%%%%%%%%%%%%%%%%%%%%%%%%%%%%%%%%%%%%%%
% a good looking way to format urls
% http://mirror.its.uidaho.edu/pub/tex-archive/help/Catalogue/entries/url.html
\usepackage{url}
% Define a new 'leo' style for the package that will use a smaller font.
\makeatletter\def\url@leostyle{%
 \@ifundefined{selectfont}{\def\UrlFont{\sf}}{\def\UrlFont{\scriptsize\ttfamily}}} \makeatother\urlstyle{leo}
%%%%%%%%%%%%%%%%%%%%%%%%%%%%%%%%%%%%%%%%%%%%%%%%%%%%%%%%%%%%%%%%%%%%%%%

%%%%%%%%%%%%%%%%%%%%%%%%%%%%%%%%%%%%%%%%%%%%%%%%%%%%%%%%%%%%%%%%%%%%%%%
%%% CUSTOM MARGINS %%%%%%%%%%%%%%%%%%%%%%%%%%%%%%%%%%%%%%%%%%%%%%%%%%%%
%\setlength{\voffset}{-0.5in}
%\setlength{\hoffset}{-0.5in}
%\setlength{\textheight}{8.5in}
%\setlength{\textwidth}{6in}

%% or Luse GEOMETRY pacakge
%%\usepackage[margin=1.1in, dvips, letterpaper]{geometry}

%%%%%%%%%%%%% OR USE EXACT DIMENSIONS %
%\def\baselinestretch{1.1}
\setlength{\textwidth}{6.5in}     %%
\setlength{\oddsidemargin}{0in}   %%
\setlength{\evensidemargin}{0in}  %%
\setlength{\textheight}{8.5in}    %%
\setlength{\topmargin}{0in}       %%
\setlength{\headheight}{0in}      %%
\setlength{\headsep}{.3in}         %%
\setlength{\footskip}{.5in}       %%
%%%%%%%%%%%%%%%%%%%%%%%%%%%%%%%%%%%%%%%%%%%%%%%%%%%%%%%%%%%%%%%%%%%%%%%

%%%%%%%%%%%%%%%%%%%%%%%%%%%%%%%%%%%%%%%%%%%%%%%%%%%%%%%%%%%%%%%%%%%%%%%
%%%%%%%%%%%%%%%%%%%%%%%% NUMBERING %%%%%%%%%%%%%
\newtheorem{theorem}{Theorem}
\newtheorem{proposition}[theorem]{Proposition}

\theoremstyle{definition}
\newtheorem{definition}[theorem]{Definition}
\newtheorem{example}[theorem]{Example}
\theoremstyle{remark}
\newtheorem{remark}[theorem]{Remark}

\numberwithin{equation}{section}
\numberwithin{theorem}{section}
%\renewcommand{\labelitemi}{ {\small $\rhd$}}
%%%%%%%%%%%%%%%%%%%%%%%%%%%%%%%%%%%%%%%%%%%%%%%%%%%%%%%%%%%%%%%%%%%%%%%

%%%%%%%%%%%%%%%%%%%%%%%%%%%%%%%%%%%%%%%%%%%%%%%%%%%%%%%%%%%%%%%%%%%%%%%
\definecolor{Red}{rgb}{0.8,0,0.1}
%\definecolor{Green}{rgb}{0,0.7,0}
%%%%%%%%%%%%%%%%%%%%%%%%%%%%%%%%%%%%%
%%% used for editing and making comments in color
% Example \ig{Remarks and Commets}
%\newcommand{\ig}[1]{\textcolor{Red}{Ig: #1}}    %{\textcolor[rgb]{1.00, 0.0, 0.0}{Ig: #1} }

%%\newcommand{\todo}[1]{\textcolor{Red}{TODO: #1}}
%%%%%%%%%%%%%%%%%%%%%%%%%%%%%%%%%%%%%

%%%%%%%%%%%%%%%%%%%%%%%%%%%%%%%%%%%%%
%%%     Igor's macros
%% \mathcal Letters

\def\cF{\mathcal{F}}

\def\cN{\mathcal{N}}

\def\cZ{\mathcal{Z}}

%% \mathbb Letters

\def\bE{\mathbb{E}}
\def\bF{\mathbb{F}}

\def\bN{\mathbb{N}}

\def\bR{\mathbb{R}}

%% \mathscr Letters

%%%%%%%%%%%%%%%%%%%%%%%%%%%%%%%%%%%%%%%%%%%%%%%%%%%%%%%%%%%%%%%%%%%%%%%%%%%%%%%%%
%%%%%%%%%%%%%%%%%%   Nonstandard notations  %%%%%%%%%%%%%%%%%%%%%%%%%%%%%%%%%%%%%
      % partial derivative
\newcommand{\1}{\mathbbm{1}}            % preferable way of writing indicator function
%\newcommand{\indFn}[1]{1 \! \! 1_{#1}} % indicator function, old version
            % set: {xyz} to be used for inline formulas
 % set: {xyz} to be used for seapare (not inline) formulas
              % mid bar with small spaces before and after: x | y
          % big bar with small spaces before and after:
       % mid bar with small spaces before and after: x | y
   % absolute value
%\newcommand{\d}{\mathrm{d}}
   % d in dt
           % sgn in formulas

 % ess sup
 % ess inf

\DeclareMathOperator*{\argmin}{arg\,min} % argmin
 % argmax
          % \Cov for covariance
          % \Var for variance
 % \std for Standard deviation

%%%%%%%%%%%%%%%%%%%%%%%%%%%%%%% FINANCE %%%%%%%%%%%%%%%%%%%%%%%%%%%%%%%%%%%%%%%%%%%%%%%%%%
\DeclareMathOperator{\var}{\mathrm{V}@\mathrm{R}}           % \V@R Value-at-risk
                    % Tail Conditional Expectation
         % \TV@R tail Value-at-risk
         % \AV@R average Value-at-risk
                    % \ent = Entropic Risk Measure
                    % \glr = gain to loss ratio

%%%%%%%%%%%%% namedlabels inside anumerate

\makeatletter
\def\namedlabel#1#2{\begingroup
    #2%
    \def\@currentlabel{#2}%
    \phantomsection\label{#1}\endgroup
}
\makeatother

\usepackage{bigfoot}
\usepackage{tcolorbox}
\usepackage{lipsum}
\usepackage{placeins}
%% Make the BIS-Sitation Shorter
\defcitealias{EBA2018}{EBA, 2019} % Then in the text, \citepalias{bas09}
\defcitealias{PRA2018}{PRA, 2018} 
\defcitealias{Bas2011}{BCBS, 2011}
\defcitealias{Bas2006}{BCBS, 2006} 
\defcitealias{Bas1996}{BCBS, 1996} 
\defcitealias{EBA.NMRF}{EBA, 2020}

\makeatletter
\def\ps@pprintTitle{%
  \let\@oddhead\@empty
  \let\@evenhead\@empty
  \def\@oddfoot{\reset@font\hfil\thepage\hfil}
  \let\@evenfoot\@oddfoot
}
\makeatother

\begin{document}

\begin{frontmatter} 
\title{A novel scaling approach for unbiased adjustment of risk estimators}

\author[a1]{Marcin Pitera}
\ead{marcin.pitera@uj.edu.pl}
\address[a1]{Institute of Mathematics, Jagiellonian University,
            Krakow, Poland            }
          
\author[a2]{Thorsten Schmidt}
\address[a2]{Mathematical Institute, Albert-Ludwigs University of Freiburg,
                Freiburg, Germany}

\author[a3]{\L{}ukasz Stettner}
\address[a3]{Institute of Mathematics, Polish Academy of Sciences, 
            Warsaw, Poland
            }

\cortext[cor1]{Corresponding author}

\begin{abstract}
The assessment of risk based on historical data faces many challenges, in particular due to the limited amount of available data, lack of stationarity, and heavy tails. 
While estimation on a short-term horizon for less extreme percentiles tends to be reasonably accurate, extending it to longer time horizons or extreme percentiles poses significant difficulties. The application of theoretical risk scaling laws to address this issue has been extensively explored in the literature. 

This paper presents a novel approach to scaling a given risk estimator, ensuring that the estimated capital reserve is robust and conservatively estimates the risk. We develop a simple statistical framework that allows efficient risk scaling and has a direct link to backtesting performance. Our method allows time scaling beyond the conventional square-root-of-time rule, enables risk transfers, such as those involved in economic capital allocation, and could be used for unbiased risk estimation in small sample settings.

To demonstrate the effectiveness of our approach, we provide various examples related to the estimation of value-at-risk and expected shortfall together with a short empirical study analysing the impact of our method.
\end{abstract}

\begin{keyword}
value-at-risk\sep expected shortfall \sep risk estimation\sep risk scaling \sep confidence level scaling \sep exotic risk estimation \sep square-root-of-time rule \sep unbiased estimation of risk \sep risk measures

\end{keyword}

\date{First circulated: \today, This version: \today}
%\date{\today}  

\end{frontmatter}

%% main text
\section{Introduction}

In the financial industry, scaling of risk measures to different time horizons or different confidence levels is a well-established approach, see, e.g., \cite{Car2009}, \cite{Dan2011}, and references therein.
This procedure called \emph{risk scaling} in the following, is often  applied to value-at-risk or expected shortfall for estimating capital reserves. In particular, this includes models designed for the {\it Internal models approach} (IMA) for Pillar 1 market risk capital reporting, where the 1-day holding period is scaled to the 10-day holding period or economic capital models, where confidence level and holding period are simultaneously scaled, see \cite{EGIM,EGICAAP} for the EU regulatory background. In fact, the \emph{time-scaling procedure}, typically based on square-root-of-time rule, is the most common choice when estimating 10-day value-at-risk (VaR); this is consistently confirmed by the regulatory monitoring reports showing that around 80\% of the financial institutions (included in the studies) use time scaling when estimating 10-day VaR, see~\cite{EBA2022}.

In addition, risk scaling is frequently used to quantify risks associated with exotic risk factors within the {\it Risks not in VaR} (RNIV) and {\it Risks not in the model engines} (RNIME) regulatory frameworks. This is particularly useful when data is scarce, such as in cases where only monthly quotes are available, see~\cite{EGIM} and \cite{PRA2018} for regulatory details. Risk scaling is also an integral part of the upcoming {\it Non-modellable risk factors} (NMRF) framework, see e.g. \cite{EBA.NMRF}.

There is a rich literature on risk scaling with a particular focus on time-horizon scaling. We refer to \cite{EmbKauPat2005} for an excellent overview of general risk factor time scaling methodologies and their impact on risk estimation. Also, see \cite{BlaCaiDow2000}, \cite{DanZig2006}, \cite{BruKau2007}, \cite{WanYehChe2011}, \cite{KinWag2014}, \cite{Zha2022}, and \cite{Gua2022} for papers that directly target the risk time scaling problem, using various tools linked to GARCH modelling, quantile regression, or specific diffusion assumptions. However,  scaling based on the square-root-of-time rule leads to several problems and, in particular, to biased estimators; see \cite{DieHicInoSch1997,SaaRah2008,HamEns2010,WanYehChe2011,SkoErdChe2011,RuiHie2022} and references therein. 

There is also extensive literature on confidence level scaling, considered on a stand-alone basis or jointly with time horizon scaling. Confidence level scaling is typically based on quantile ratios, convolution approaches, semi-parametric tail estimation, extreme value theory, or dynamic scaling exponent modelling. For details, we refer to \cite{DanDeV2000}, \cite{DowBlaCai2004}, \cite{DegEmb2011}, \cite{SpaDubTer2014}, \cite{BraDiM2021}, and the references therein.

It is quite surprising that despite being a popular procedure adopted by practitioners, no industry standard for general risk scaling, especially in reference to confidence level scaling, has been established. This might be attributed to the fact that most of the developed methods are inherently linked to strong parametric assumptions, result in non-stable risk projections, produce too conservative results, or lack proper statistical error analysis.

We take this as a starting point for our work and develop a novel general framework that allows efficient risk scaling and at the same time controls for possible risk underestimation. The basis for our approach is the risk unbiasedness concept initiated in \cite{PitSch2016}. In a nutshell, we study a position secured with a scaled risk measure and determine the smallest scaling factor that renders the secured position acceptable, see Section \ref{S:robust.scaling} for precise definitions. This natural condition allows for direct control over the backtesting performance, removing unnecessary biases in other scaling methods. Both our numerical examples and the empirical study in Section~\ref{S:numerical} demonstrate the feasibility and effectiveness of our approach.

The paper is organised as follows. In Section~\ref{S:estimation} we recall the basics of risk estimation while in Section~\ref{S:scaling} we focus on the benchmark risk scaling methods. In Section~\ref{S:RiskBias} we revisit unbiasedness in the context of the estimation of risk measures adapted to our setting. The core of the paper is found in Section~\ref{S:robust.scaling} where we introduce our general framework. This is followed by examples that explain how the framework introduced in this paper could be used for efficient risk scaling in Section~\ref{S:examples}. Finally, in Section~\ref{S:numerical}, we present a simple empirical study that shows the impact of scaling on capital adequacy.

\section{Estimating risk measures}\label{S:estimation}
To lay the foundation of our work, we give a short introduction to the estimation of risk. For a detailed exposition of risk measures, we refer to \cite{FolSch2016}. Consider a reference probability space and denote by $L^0$ the space of all real-valued random variables.
Let us fix for the moment the time horizon to one day. We are interested in the risk of a position $X\in L^0$ over the time horizon. 
The (unknown) distribution of the random outcome of $X$ is denoted by $\bF_0$. For estimation, one considers a family $\cF$ of distribution functions, for example, given by a parametric family, and assumes $\bF_0 \in \cF$.

Risk is measured using a monetary risk measure $\rho$. A monetary risk measure is a mapping $\rho\colon L^0\to\bR\cup \{+\infty\}$ which satisfies \emph{monotonicity}, i.e.~$X \le Y$ implies $\rho(X) \ge \rho(Y)$, and \emph{cash invariance}, i.e.~for $m \in \bR$, $\rho(X+m) = \rho(X)-m$. Furthermore, we will assume that $\rho$  is \emph{positively homogeneous}, such that for all $\lambda \ge 0$, $\rho(\lambda X)= \lambda \rho(X)$ and that $\rho$ is \emph{law-invariant}, i.e.\ there is a function $R:\cF \to \bR \cup \{+\infty\}$ such that $\rho(X)=R(\bF)$ whenever $X \sim \bF$.

For estimation, there is a sample $\boldsymbol{X}:=(X_1,\ldots,X_n)$ at hand. We assume that $X_1,\dots,X_n$ are i.i.d.~with $X_1 \sim \bF_0$ and that the sample is independent of $X$. An estimator of $\rho(X)$ is simply a measurable function from the sample to the real numbers, which we denote by $\hat\rho_n\colon \bR^n\to \bR$. To simplify the exposition, we make the technical assumption $\rho(X)\geq -\bE[X]$ and sometimes use $\hat\rho$ instead of $\hat\rho_n$.

With a slight abuse of notation, throughout this paper, the subscript notation in $\hat\rho$ (and $\rho$) will depend on the underlying context. It might be used to emphasise the underlying sample size, risk measure confidence threshold, or the holding period specification -- we hope this will be clear from the context. 
 
As for the risk measure estimation, we require the following properties from the estimator. Note that monotonicity is not required, as it might be not satisfied by parametric risk estimators.

\begin{definition}[Risk estimator]\label{def:risk.estimator}
The measurable function $\hat\rho_n\colon \bR^n\to \bR$ is called a {\it risk estimator} if it satisfies 
\begin{enumerate}[1)]

\item {\it cash invariance}, i.e. for any $\boldsymbol{x}\in\bR^n$ and $m\in\bR$ it holds that $\hat\rho_n(\boldsymbol{x}+m)=\hat\rho_n(\boldsymbol{x})-m$;
\item {\it positive homogenity}, i.e. for any $\boldsymbol{x}\in\bR^n$ and $\lambda \ge 0$ it holds that $\hat\rho_n(\lambda\boldsymbol{x})=\lambda\hat\rho_n(\boldsymbol{x})$.
\end{enumerate}
\end{definition}

Such a property relates to so-called {\it equivariant estimators} in a statistical context, see Chapter 10 in \cite{keener2010theoretical} for example. Hence, we require that risk estimators inherit two axiomatic properties of the underlying risk measure $\rho$. Since we are dealing mostly with {\it value-at-risk} (VaR) and {\it expected shortfall} (ES) risk measures, we assume positive homogeneity; this property could be dropped if one is interested in estimating convex risk measures. 

While the properties in Definition~\ref{def:risk.estimator} are reasonable requirements, they are obviously not sufficient for correct estimation and further properties linked to unbiasedness, consistency, or efficiency are necessary. Since the main focus of this paper is scaling, we will  not discuss estimation procedures in detail and simply start from a given  estimator. We refer to \cite{McnFreEmb2010}, \cite{KraSchZah2014},  \cite{BarKolDij2023}, and references therein for more information about risk estimation. 

Let us consider some examples: on the one hand, for estimating VaR at level 1\% and $n=250$, one could use the non-parametric quantile estimator given by
\begin{equation}\label{eq:varHS}
\hat\var^1(\boldsymbol{X}):=-\tfrac{1}{2}(X_{(2)}+X_{(3)}),
\end{equation}
where $X_{(k)}$ is the $k$th order statistic of the sample, or the parametric estimator given by
\begin{equation}\label{eq:varNORM}
\hat\var^2(\boldsymbol{X}):=-(\hat\mu(\boldsymbol{X})+\hat\sigma(\boldsymbol{X})\Phi^{-1}(0.01)),
\end{equation}
where $\Phi$ is the standard normal cumulative distribution function, $\hat\mu(\boldsymbol{X})$ is the sample mean, and $\hat\sigma(\boldsymbol{X})$ is the sample standard deviation. 

On the other hand, for estimating ES at level $\alpha=2.5\%$ and $n=250$, one could use the non-parametric estimator 
\begin{equation}\label{eq:esHS}
\hat{\textrm{ES}}^1(\boldsymbol{X}):=-\frac{1}{6}\sum_{i=1}^{6}X_{(i)},
\end{equation}
or the parametric  estimator
\begin{equation}\label{eq:esNORM}
\hat{\textrm{ES}}^2(\boldsymbol{X}):=-\left(\hat\mu(\boldsymbol{X})+\hat\sigma(\boldsymbol{X})\frac{-\phi(\Phi^{-1}(0.025))}{0.975}\right),
\end{equation}
see \cite{McN1999} and \cite{Car2009} for details.

Most of the risk estimators considered in the literature, including the ones above, are {\it plug-in estimators}. 
These are obtained by estimating the underlying distribution by a classical estimation methodology and then plugging the estimated distribution function into the calculation of the risk measure:  if the law-invariant risk measure is $\rho(X)=R(\bF_0)$ and $\bF_0$ is estimated by $\hat F_n$, then the plug-in estimator is given by $R(\hat F_n)$. 
For example, estimators \eqref{eq:varHS} and \eqref{eq:esHS} are obtained by plugging in the empirical distribution function while \eqref{eq:varNORM} and \eqref{eq:esNORM} are obtained by plugging in the normal distribution with mean estimated by $\hat\mu(\boldsymbol{X})$ and standard deviation estimated by $\hat\sigma(\boldsymbol{X})$. 

\section{Benchmark plug-in scaling methods}\label{S:scaling}

Two methods are the most common scaling methods in industry practice: for time scaling, e.g. from a 1-day holding period to a 10-day holding period, one uses the {\it square-root-of-time rule} which will be explained in detail in Section \ref{S:square.root}. For confidence level scaling, e.g. from 1.00\% to 0.05\% confidence threshold, one uses the {\it theoretical quantile ratio} which we present in Section~\ref{S: risk ratios}. 
Both methods are obtained by computing a scalar for a theoretical scenario and using this scalar for scaling the risk estimator. In fact, most scaling methods considered in the literature are obtained by a similar procedure.

Those two methods are the most common choice for the following two reasons: simplistic nature and avoidance  of model-induced risk that might lead to non-stable outcomes. Many alternative methods in the literature are appealing from a theoretical point of view and lead to better top-level performance. However, to the best of our knowledge, they are not frequently used in  production environments to avoid internal framework incoherence. 

To explain this, consider a single risk factor GARCH-type scaling or a univariate diffusion-based scaling (see \cite{EmbKauPat2005} for the overview of single risk factor scaling methods). In such a setting, the sample of {\it profits and losses} (P\&Ls) on a portfolio level is often constructed as a sum of P\&Ls of individual positions.
Those individual positions depend on a high number of risk factors that must be shocked jointly. Consequently, the construction of a consistent portfolio-level P\&Ls relies on a high-dimension sample construction method which should not be subject to further top-level dynamic adjustments, see \cite{Jor2007} for details. Having this in mind, we decided to benchmark our methodology with the aforementioned basic scaling methods, sometimes reinforced with further distributional assumptions. For a summary of other approaches, we refer to the introduction, where methods based on quantile regression, specific diffusion assumptions, GARCH dynamics, extreme value theory, or convolution methods are mentioned together with literature references.

\subsection{Time-scaling: square-root-of-time rule}\label{S:square.root}

The square-root-of-time rule is a common tool for shifting the risk from one holding period to another holding period. 
While the method is a good approximation under the assumption of i.i.d.~and normally distributed returns, these are often violated in financial data, see  \cite{WanYehChe2011} for a detailed discussion and empirical assessment of the consequences. 
Still, this method is the most common holding periods scaling tool, especially when a 1-d holding period is scaled to a 10-day holding period, and begins with a short exposition of the rule. We refer to  \cite{WanYehChe2011}, \cite{DanZig2006}, and \cite{EBA2022} for literature and further details.

For simplicity, let us start from a given 1-d risk estimator $\hat\rho$ and aim at transferring it to a holding period of $m$ days,  for $m\in\bN$. 
The scaled risk estimator $\hat\rho_m$ under the  \emph{square-root-of-time rule} is given by 
\begin{equation}\label{eq:srqt}
\hat\rho_m(\boldsymbol{X}) = \sqrt{m} \cdot \hat\rho(\boldsymbol{X}).
\end{equation}
This approach is most frequently associated with a zero-mean Gaussian setup:  let  $Z_i\sim\cN(0,\sigma)$, $i=1,\ldots,m$, denote 1-day P\&Ls with unknown  $\sigma>0$. Then, the aggregated $Z:=Z_1+\ldots+Z_m$ refers to an $m$-day P\&L. If $(Z_i)_{i=1}^{m}$ are independent,  $Z \sim \cN(0,\sqrt{m}\cdot \sigma)$. Then, for a monetary and positively homogeneous risk measure $\rho$, it holds that
\[
\rho(Z)=\rho\left(\sqrt{m}\cdot Z_1\right)=\sqrt{m}\cdot \rho(Z_1).
\]
In theory, this method already fails when one of the assumptions is dropped: i.i.d., vanishing mean or stability of the distribution  under summation.

For a short holding period scaling (e.g. when $m=10$) typically the mean $\mu$ is much smaller in comparison to the standard deviation $\sigma$, i.e.~$\mu\ll \sigma$. Also, the standard deviation of $Z$ is proportional to (or dominated by) $\sqrt{m}\cdot \sigma$. This might partly justify the popularity of \eqref{eq:srqt} within IMA methodologies. 

For a longer time horizon,  one needs to take into account that the mean scales linearly in time: again, in the above setting, \begin{equation}\label{eq:mean.scale.normal}
\rho(Z)=-m\cdot\mu+\sqrt{m}\cdot(\rho(Z_1)+
\mu),
\end{equation}
and this formula might be used for statistical estimation with $\mu$ replaced with a sample estimator $\hat\mu$ and $\rho(Z_1)$ replaced with $\hat\rho(\boldsymbol{X})$, see \cite{Car2009} for details.

The square-root-of-time rule can also be used for \emph{down scaling}, for example when scaling a 10-day risk to a 1-day risk. In this case, one  simply scales with  $m=1/\sqrt{10}$.

\subsection{Confidence-scaling: normal risk ratios}\label{S: risk ratios}

Scaling the confidence level is typically used when transferring risk to a more extreme confidence level, e.g. from 1.00\% to 0.05\% in the context of regulatory measures. Let $\rho_\alpha$ denote VaR or ES at level $\alpha\in (0,1)$ with corresponding estimator $\hat \rho_{\alpha}(\boldsymbol{X})$. Scaling from level $\alpha$ to level $\beta\in (0,1)$ based on the \emph{normal risk ratio}, is obtained by multiplying with the corresponding factor under normality, the normal risk ratio $d$ given by 
\[
d(\alpha,\beta):=\frac{\rho_{\alpha}(Z)}{\rho_{\beta}(Z)},\quad\textrm{$Z\sim N(0,1)$},
\]
such that the scaled estimator $\hat\rho_{\beta}(\boldsymbol{X})$ equals
\[
\hat\rho_{\beta}(\boldsymbol{X})=d(\alpha,\beta)\cdot \hat\rho_{\alpha}(\boldsymbol{X}).
\]
For VaR the normal risk ratio computes to  
\[
d_1(\alpha,\beta):=\frac{\Phi^{-1}(\alpha)}{\Phi^{-1}(\beta)},
\]
see \cite{Jor2007} and \cite{SpaDubTer2014}. As for the square-root-of-time rule, this procedure is inherently linked to a zero-mean-normality assumption: indeed, assuming $Z\sim \cN(0,\sigma)$, 
\[
\rho_{\alpha}(Z)=\sigma\Phi^{-1}(\alpha)=\tfrac{\Phi^{-1}(\alpha)}{\Phi^{-1}(\beta)}\cdot \sigma\Phi^{-1}(\beta)=d(\alpha,\beta)\cdot \rho_{\beta}(Z),
\]
which explains why the VaR confidence-level scaling is based on the quantile ratio and shows that $\hat\rho_{\beta}(\boldsymbol{X})$ is a simple plug-in estimator under the zero mean normality assumption. Note that for ES, using \eqref{eq:esNORM}, we obtain
\[
d_2(\alpha,\beta):=\frac{\phi(\Phi^{-1}(\alpha))}{\phi(\Phi^{-1}(\beta))}\cdot \frac{1-\beta}{1-\alpha}.
\]
As before, this estimator can be easily adjusted to account for a non-zero mean as in Equation \eqref{eq:mean.scale.normal}. Also, it is possible to work under different distributional assumptions leading to  other confidence-level transfer constants. That said, the confidence level scaling is often accompanied by time scaling, and the normal quantiles are often justified by the central limit theorem.

\section{A small introduction to unbiased risk estimation}\label{S:RiskBias}
In this section we shortly revisit the notion of {\it risk bias}, and show that ensuring unbiased  risk estimation improves backtesting performance. We refer to~\cite{PitSch2016} for a detailed exposition and empirical results.  Recall that a risk estimator is a measurable function $\hat\rho_n\colon \bR^n\to \bR$ of the given sample $\boldsymbol{X}$.
We distinguish an observed value of the sample $\mathbf{x} \in \bR^n$ from the random variable $\boldsymbol{X}$.
By the i.i.d.\ assumption the distribution of $\boldsymbol{X}$ is given by the product measure of the (unknown) reference distribution $\bF_0$ of the risky position $X$. 
Note that the estimator $\hat\rho_n(\boldsymbol{X})$
is again a random variable, highlighting that the outcome of the estimation is random and hence varies from sample to sample. Unbiasedness in a statistical context requires that the expectation of the estimator equals the estimated quantity, which we recall in the following.
\begin{definition}[Statistical unbiasedness]
The estimator $\hat \rho_n$ is  {\it statistically unbiased} for $\rho(X)$ if 
\begin{equation}\label{eq:statistical.bias}
\bE[\hat\rho_n(\boldsymbol{X})]=\rho(X),
\end{equation}
under the (unknown) reference  probability measure.
\end{definition}
Since the reference probability measure is  not known, one requires property \eqref{eq:statistical.bias} for all distributions in the considered family $\cF$.

As observed in~\cite{PitSch2016}, statistical unbiasedness is not well-suited for risk management purposes: the expectation is linked to the average of many repeated measurements and the statistical unbiasedness property implies that under- and over-estimation balances on the long run. However, in risk management one prefers capital adequacy in terms of risk over properties in average. The concept of \emph{risk unbiasedness} remedies this, as we illustrate in the following.
The key is to consider the 
{\it secured position} 
\begin{align}
    \label{eq:secured position}
    S &= X+\hat\rho_n(\boldsymbol{X})
\end{align}
which is obtained by adding the estimated capital reserve to the underlying position $X$. The main goal of risk management is to achieve acceptability of the secured  position $S$, i.e.\ the secured position should be free of unwanted risk, corresponding to an adequate estimation of the risk capital. This is precisely what is required from a risk unbiased estimator, which we define in the following.

\begin{definition}[Risk unbiasedness]
The estimator $\hat\rho_n$ is   {\it risk unbiased} for $\rho(X)$ if 
\begin{equation}\label{eq:risk.bias}
\rho(X+\hat\rho_n(\boldsymbol{X}))=0.
\end{equation}
under the (unknown) reference probability measure.
\end{definition}
Again, since the reference probability measure is unknown, we require this property for all distributions in $\cF$.

The above equation mimics key properties of monetary risk measures: $\rho(X)$ is the adequate capital reserve
to render $X$ an acceptable position, i.e.\ if we knew the risk of $X$, we could simply use cash invariance to obtain
\[
\rho(X+\rho(X))=\rho(X)-\rho(X)=0.
\]
Unfortunately, since the distribution of $X$ is unknown, it has to be estimated by the random variable $\hat\rho(\boldsymbol{X})$ depending on the particular realisation of the historical data. 
The relation to statistical bias is obtained when the risk measure under consideration is the expectation, since then $\rho(\cdot)=-\bE[\cdot]$ so that  \eqref{eq:statistical.bias} and \eqref{eq:risk.bias} coincide.
Except for this case, the concept of risk unbiasedness is preferable over statistical unbiasedness for risk estimators.

\begin{remark}[Risk unbiasedness as optimality condition]
We want to emphasize that risk unbiasedness  should not be used as a single optimality criterion when looking for the best risk estimator (in the same way that unbiasedness should not be used as a sole criterion in the classical statistical setup):  while  risk unbiasedness guarantees proper capital securitisation, it permits the usage of estimators in which the reserves are too high, see Example 7.3 in \cite{PitSch2016} for further details. 
\end{remark}

\subsection{The link to backtesting}\label{S:backtesting}
Backtesting is the most common performance or conservativeness evaluation tool for risk measures, see \cite{Acerbi2014Risk}, \cite{HeKouPen2022}, \cite{DuPeiWanYan2023}, and references therein.

As shown in \cite{MolPit2017}, risk unbiasedness has a direct link to backtesting performance for VaR when following the regulatory backtesting framework, see \cite{Bas1996}. The link could be recovered by considering a performance measure that is dual to the family of VaR risk measures. Namely, given the family of risk measures $\{\var_\alpha\}_{\alpha\in (0,1)}$ that is decreasing with respect to the confidence threshold $\alpha$ and following the generic dual performance measure framework developed in \cite{CheMad2009}, we can define a {\it performance measure} $T$ (also called {\it acceptability index}), dual to the risk family $\{\var_\alpha\}_{\alpha\in (0,1)}$, by setting
\begin{equation}\label{eq:T.perf}
T(\cdot)=\inf\{\alpha\in (0,1): \var_{\alpha}(\cdot)\leq 0\}.
\end{equation}
In particular, for the secured position $S$, the value $T(S)$ identifies the smallest confidence level $\alpha\in (0,1)$ under which the secured position $S$ is acceptable. Of course, given the VaR reference threshold $\alpha_0\in (0,1)$, we want the value of $T(S)$ to be close to $\alpha_0$.

Now, if we consider a secured position sample and an empirical equivalent of \eqref{eq:T.perf} then we can recover the standard {\it exception rate} statistic that is used for regulatory VaR backtesting.
Namely, let us assume that we are given a secured position sample vector $\boldsymbol{S}=(S_1,\ldots,S_n)$ of size $n\in\bN$, that could be obtained e.g. by summing daily risk projections with realised portfolio P\&Ls. We define the empirical version of \eqref{eq:T.perf} given by
\begin{equation}\label{eq:T.perf2}
\hat T(\boldsymbol{S})=\inf\{\alpha\in (0,1): -S_{(\lfloor n\alpha\rfloor +1)}\leq 0\};
\end{equation}
note that \eqref{eq:T.perf2} is constructed by replacing theoretical VaR measure with a non-parametric empirical VaR estimator. Then, from Proposition 3.1 in \cite{MolPit2017}, we get
\[
\hat T(\boldsymbol{S})=\sum_{i=1}^{n}\frac{\1_{\{S_i<0\}}}{n}.
\]
This shows that $\hat T(\boldsymbol{S})$ is effectively counting how many times (on average) the secured position was non-positive, i.e. the risk reserve was not sufficient to cover the realised P\&L losses. This performance measure is in fact an averaged version of the {\it breach count} statistic defined in \cite{Bas1996}. We refer to \cite{MolPit2017} and \cite{PitSch2022} for more details and proofs.

Now, let us assume that the risk estimator is risk unbiased for VaR at reference level $\alpha_0\in (0,1)$, corresponding to  $\var_{\alpha_0}(S)=0$.
Then, recalling that the family $\{\var_\alpha\}_{\alpha\in (0,1)}$ is decreasing with respect to $\alpha$, we immediately get $T(S)=\alpha_0$. Consequently, we expect the empirical value $\hat T(\boldsymbol{S})$ to be also close to $\alpha_0$. In other words, risk unbiased estimators should behave well in reference to regulatory backtesting.

\section{Risk unbiased scaling}\label{S:robust.scaling}
In this section, we introduce a new scaling methodology that achieves an appropriate notion of risk unbiasedness. 
For the beginning, we assume vanishing mean of $X$ (or $\mu\ll \sigma$) while a natural extension to the general non-vanishing mean setup detailed in Remark~\ref{rem:non-zero.mean}.

Assume that we are given a reference risk estimator $\hat\rho_n$ based on the sample $\boldsymbol{X}$.
We call $c \cdot \hat \rho_n(\boldsymbol{X})$ with $c \ge 0$ a \emph{scaled risk estimator} and introduce, 
in analogy to Equation \eqref{eq:secured position},  the \emph{scaled secured position}
\[
S(c):=X+c\cdot \hat\rho_n(\boldsymbol{X}), \quad c \ge 0.
\]
We recover the secured position by $S=S(1)$. Typically, the family $(S(c))_{c\ge 0}$ will be non-decreasing in $c$. Indeed, this is already the case when $\hat\rho_n(\boldsymbol{X})\geq 0$. We point out  that in typical situations  $\rho(X)> 0$: for example, in the value-at-risk case with mean zero, if the confidence level $\alpha$ is not too large, the value-at-risk will be positive. In the Gaussian case, this is true for $\alpha<0.5$.  A large positive mean can lead to a negative risk measure, compare Equation \eqref{eq:varNORM}.

Consequently, under some mild conditions imposed on $\rho$, such as properness, we will be able to find a minimal constant $c>0$ such that the position $S(c)$ is risk unbiased. 
To this end, we set
\begin{equation}\label{eq:c.star}
c^*_{\bF_0}:=\inf\{c>0\colon \rho(S(c))\leq 0\},
\end{equation}
with convention $\inf\emptyset=\infty$ and define the optimally scaled estimator
\begin{equation}\label{eq:rho.star}
\hat\rho_n^0(\boldsymbol{X}) := c^*_{\bF_0} \cdot \hat\rho_n(\boldsymbol{X}).
\end{equation}

The intuition behind \eqref{eq:c.star} is the following: we seek the smallest scaling factor $c$  such that the scaled position $S(c)$ carries no risk. We emphasize that $\rho$ depends on the (unknown) reference probability $\bF_0$ and will show in the following how to overcome this dependence.
The scaling idea introduced in \eqref{eq:c.star} can be used for various contexts, including general scaling, time-scaling scaling, confidence-threshold scaling, or small-sample scaling. We will show in Equation \eqref{eq:scaled estimator unbiased} below that the optimally scaled estimator is unbiased under weak assumptions.

Computationally, the main challenge in computing $c_{\bF_0}$ is linked to the fact we need  to compute $\rho(S(\cdot))$, which a priori requires information about the underlying distribution, to obtain the optimal choice $c^*_{\bF_0}$. This is a common statistical challenge shared by many frameworks. In particular, we need the same information to compute statistical bias or risk bias evaluation -- see \eqref{eq:statistical.bias}~and~\eqref{eq:risk.bias}.

To remediate the problem with the dependence on $\bF_0$, we suggest the following robust approach: take the most conservative scalar value for the pre-specified family $\cF$, i.e.\ choose 
\begin{equation}\label{eq:c.star3}
c^* := \sup_{\bF\in\cF}c^*_{\bF},
\end{equation}
where $c^*_{\bF}:=\inf\{c>0\colon \rho_{\bF}(S(c))\geq 0\}$ and $\rho_{\bF}(X):=R(\bF)$ is the risk measure when $X\sim \bF$ (this is possible due to law-invariance of $\rho$). The associated \emph{robust scaled estimator} is denoted by 
\begin{equation}\label{eq:rho.star most conservative}
    \hat\rho_n^*(\boldsymbol{X}) := c^* \cdot \hat\rho_n(\boldsymbol{X}).
\end{equation}

Since the robust estimator is the most conservative one, it might be biased  for  the (unknown) $\bF_0\in\cF$.
However, it does not underestimate the risk, i.e.\ it is biased in the right direction as illustrated in the following result.

\begin{proposition}\label{pr:unb}
Assume that  $\hat\rho_n(\boldsymbol{X})$ is bounded, non-negative, and we have $c^*<\infty$. Then,
the secured position scaled with $c^*$ is riskless, that is
\begin{equation}\label{eq:bias.negative}
    \rho(S(c^*))= \rho(X+\hat\rho_n^*(\boldsymbol{X}))\leq  0
\end{equation}
under the reference probability measure $\bF_0\in \cF$. Moreover, $\hat\rho_n^*(\boldsymbol{X})$ is the smallest scaled risk estimator such that \eqref{eq:bias.negative} holds for all $\bF \in \cF$.
\end{proposition}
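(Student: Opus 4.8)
The plan is to derive everything from two elementary properties of the map $c\mapsto\rho_{\bF}(S(c))$, for each fixed $\bF\in\cF$: it is non-increasing, and it is Lipschitz continuous. Both follow purely from the monetary-risk-measure axioms together with the hypotheses that $\hat\rho_n(\boldsymbol{X})$ is non-negative and bounded; fix $M$ with $0\le\hat\rho_n(\boldsymbol{X})\le M$. Monotonicity is immediate: if $c_1\le c_2$ then $S(c_1)\le S(c_2)$ pointwise because $\hat\rho_n(\boldsymbol{X})\ge 0$, so monotonicity of $\rho$ gives $\rho_{\bF}(S(c_1))\ge\rho_{\bF}(S(c_2))$. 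Hence each set $A_{\bF}:=\{c>0:\rho_{\bF}(S(c))\le 0\}$ (so $c^*_{\bF}=\inf A_{\bF}$) is an up-set, of the form $(c^*_{\bF},\infty)$ or $[c^*_{\bF},\infty)$, and the standing assumption $c^*<\infty$ forces $c^*_{\bF}\le c^*<\infty$, so $A_{\bF}\neq\emptyset$ for every $\bF\in\cF$. In particular $\rho_{\bF}(S(c))<\infty$ for $c$ large; since $S(c)\le X+cM$ and $S(c)\ge X$, monotonicity and cash invariance give $\rho_{\bF}(X)-cM\le\rho_{\bF}(S(c))\le\rho_{\bF}(X)$, which first yields $\rho_{\bF}(X)<\infty$ and then $\rho_{\bF}(S(c))\in\bR$ for all $c\ge 0$.

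Next I would establish continuity. From the pointwise bounds $S(c)-|c'-c|M\le S(c')\le S(c)+|c'-c|M$, monotonicity and cash invariance of $\rho$ give $|\rho_{\bF}(S(c'))-\rho_{\bF}(S(c))|\le|c'-c|M$, so $c\mapsto\rho_{\bF}(S(c))$ is Lipschitz and in particular continuous. Therefore $A_{\bF}$ is closed in $(0,\infty)$, the infimum $c^*_{\bF}$ is attained, and $\rho_{\bF}(S(c^*_{\bF}))\le 0$ (reading $S(0)=X$ in the degenerate case $c^*_{\bF}=0$, which happens precisely when $\rho_{\bF}(X)\le 0$).

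With these facts the proposition follows quickly. For the first assertion, apply the above with $\bF=\bF_0\in\cF$: since $c^*\ge c^*_{\bF_0}$ and $\hat\rho_n(\boldsymbol{X})\ge 0$, we have $S(c^*)\ge S(c^*_{\bF_0})$ pointwise, and monotonicity of $\rho$ yields $\rho(S(c^*))=\rho(X+\hat\rho_n^*(\boldsymbol{X}))\le\rho(S(c^*_{\bF_0}))\le 0$, which is \eqref{eq:bias.negative}. For the minimality claim, suppose $c\cdot\hat\rho_n(\boldsymbol{X})$ is a scaled risk estimator satisfying \eqref{eq:bias.negative} under every $\bF\in\cF$, i.e.\ $\rho_{\bF}(S(c))\le 0$ and hence $c\in A_{\bF}$, so $c\ge c^*_{\bF}$ for all $\bF\in\cF$; taking the supremum over $\bF\in\cF$ gives $c\ge c^*$, whence $c\cdot\hat\rho_n(\boldsymbol{X})\ge c^*\cdot\hat\rho_n(\boldsymbol{X})=\hat\rho_n^*(\boldsymbol{X})$ pointwise. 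Thus $\hat\rho_n^*(\boldsymbol{X})$ is the smallest scaled risk estimator for which \eqref{eq:bias.negative} holds across the whole family $\cF$.

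The step I expect to be the main obstacle is the attainment of the infimum defining $c^*_{\bF}$. A priori $\rho_{\bF}(S(\cdot))$ could cross $0$ with a downward jump exactly at $c^*_{\bF}$, leaving $\rho_{\bF}(S(c^*_{\bF}))>0$; then the monotonicity argument above would only deliver $\rho(S(c^*+\varepsilon))\le 0$ for $\varepsilon>0$ rather than at $c^*$ itself. Boundedness of $\hat\rho_n(\boldsymbol{X})$ is exactly what rules this out, through the distribution-free Lipschitz estimate. Everything else is routine bookkeeping with monotonicity, cash invariance, and the definition of $c^*$ as a supremum.
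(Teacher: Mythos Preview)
Your proposal is correct and follows essentially the same route as the paper's proof: both arguments hinge on showing that $c\mapsto\rho_{\bF}(S(c))$ is non-increasing (from monotonicity of $\rho$ and $\hat\rho_n(\boldsymbol{X})\ge 0$) and Lipschitz continuous (from boundedness of $\hat\rho_n(\boldsymbol{X})$ combined with monotonicity and cash invariance), then use attainment of the infimum $c^*_{\bF_0}$ together with $c^*\ge c^*_{\bF_0}$ for the first claim, and the supremum characterisation of $c^*$ for the minimality claim. Your treatment is slightly more careful in places---you explicitly verify finiteness of $\rho_{\bF}(S(c))$ and address the degenerate case $c^*_{\bF}=0$---but the skeleton of the argument, including the identification of the Lipschitz estimate as the key step ruling out a jump at the infimum, matches the paper exactly.
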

\begin{proof}
Observe that $\infty>c^*\geq c^*_{\bF_0}>0$ and let
\begin{equation}\label{eq:z.fun}
Z(c):=\rho(X+c \cdot \hat\rho_n^*(\boldsymbol{X})).
\end{equation}
Since $\rho$ is monotone and $\hat\rho_n(\boldsymbol{X}) \ge 0$, we know that $Z$ is non-increasing. Moreover, since $\hat\rho_n(\boldsymbol{X})$ is bounded, $Z$ is continuous, as for any $\epsilon>0$ we get 
\[
Z(c) -\epsilon \|\hat\rho_n(\boldsymbol{X})\|_{\textrm{sup}}\leq Z(c\pm \epsilon)\leq Z(c) +\epsilon \|\hat\rho_n(\boldsymbol{X})\|_{\textrm{sup}}.
\]
Consequently, directly from~\eqref{eq:c.star}, we get
   \begin{align} \label{eq:scaled estimator unbiased}
         \rho(X+c^*_{\bF_0} \cdot \hat\rho_n^*(\boldsymbol{X})) =  0,
    \end{align}
Thus, to conclude the proof of~\eqref{eq:bias.negative}, it is sufficient to note that $c^* \ge c^*_{\bF_0}$ and recall monotonicty of $Z$.

For the second claim consider $\gamma < c^*$. Since $\gamma < \sup_{\bF\in\cF}c^*_{\bF}$, there exists $\bF \in \cF$ such that $\gamma < c^*_{\bF} \le c^*$. Moreover, by~\eqref{eq:c.star}, $c^*_{\bF}$ is the smallest constant $c$ such that $\rho(S(c))\le 0$ under $\bF$. Since $\gamma <c^*_{\bF} $ we get $\rho(S(\gamma))> 0$, at least for $\bF$, and hence the second claim follows.   
\end{proof}
To avoid technical exposition, in Proposition~\ref{pr:unb} we assumed that the risk estimator is bounded. This assumption could be easily relaxed as effectively we only need continuity of function $Z$ defined in~\eqref{eq:z.fun}. This can be achieved by requiring weak-convergence type continuity from $\rho$ (or $R$) on a space containing the family $\{S(c)\}_{c\in \bR}$. 

Let us now provide a series of further remarks linked to our methodological proposal.  In Section~\ref{S:examples} we collect further  examples  showing how our setup could be used in practical situations.

\begin{remark}[Non-zero mean]\label{rem:non-zero.mean}
If the condition $\mu\ll \sigma$ does not hold, then the scalar introduced in \eqref{eq:c.star} might lead to a non-accurate risk estimation due to the uncertainty encoded in the location parameter $\mu$ and the fact that risk should be scaled in linear proportion to $\mu$, see \eqref{eq:mean.scale.normal}. That saying, since monetary risk measures are cash additive and most estimators considered in the literature are based on statistics linear with respect to the location parameter, we can consider a modified scaled estimator given by
\begin{equation}\label{eq:rho.star2}
\hat\rho^*(\boldsymbol{X})=-\hat\mu(\boldsymbol{X})+c^* \cdot (\hat\rho(\boldsymbol{X})+\hat\mu(\boldsymbol{X})),
\end{equation}
in which $c^*$ is equal to 
\begin{equation}\label{eq:c.star2}
c^*=\min\{c>0\colon \rho(X-\hat\mu(\boldsymbol{X})+c \cdot (\hat\rho(\boldsymbol{X})+\hat\mu(\boldsymbol{X})))\geq 0\}.
\end{equation}
In a nutshell, we simply center our position to have zero mean, and then apply scaling, similarly as done in \eqref{eq:mean.scale.normal}. Note that this approach allows a straightforward extension of our framework to the situation in which the condition $\mu\ll \sigma$ does not hold.
\end{remark}

\begin{remark}[Scale invariance]
Note that if both $\rho$ and $\hat\rho$ are positively homogeneous, the value \eqref{eq:c.star3} is invariant with respect to scale parameters, i.e. for any $\bF$, $a>0$, and rescaled values (aX,$a\boldsymbol{X}$), we have
\[
\rho_\bF(aX+c_{\bF}^*\cdot \hat\rho(a\boldsymbol{X}))=a\cdot\rho_\bF(X+c_{\bF}^*\cdot \hat\rho(\boldsymbol{X}))\leq 0.
\]
\end{remark}

\begin{remark}[Scalar value as performance measure]
The scalar value $c^*$ could be seen as a  performance measure defined in \cite{CheMad2009}. Given an initial position $X$, we can consider the family of increasing risk measures 
\[
\rho^{c}(Z):=\rho(X+c\cdot Z),
\]
defined for any random variable $Z\in \cZ$, where $\cZ$ corresponds to the space of all non-negative risk estimators. In this setting, the index dual to the family $(\rho^{c})_{c>0}$ is given by
\[
\alpha(Z):=\inf\{c>0\colon \rho^{c}(Z)\leq 0\}.
\]
In particular, in this setting we get
\[
\alpha(\hat\rho(\boldsymbol{X}))=c_{\bF_0}^*,
\]
which shows that $c_{\bF_0}^*$ might be seen as a value of the acceptability index for the unscaled estimator $\hat\rho$. This provides further motivation behind the definition of the scalar $c_{\bF_0}^*$ introduced in \eqref{eq:c.star}.
\end{remark}

\begin{remark}[Asymptotic scalar behaviour]\label{rem:asymptotic}
If $n\to\infty$, we expect the value of the scalar to go to 1, at least if the underlying estimator is consistent. In particular, this result should hold in the class of {\it plug-in estimators} under the assumption that the distribution estimation process is effective. For example, this refers to the situation when Glivenko-Cantelli Theorem can be used, see \cite{Van2000}. This is the case if the underlying risk estimator exhibits asymptotic risk unbiasedness property, see Section 6 in \cite{PitSch2016} for details.
\end{remark}

\bigskip

\section{Examples}\label{S:examples}
In this section, we show some examples that illustrate how to apply the risk scaling methodology introduced in Section~\ref{S:robust.scaling}. In Section~\ref{S:general.scaling} we present generic statistical examples, while in Section~\ref{S:specific.scaling} we focus on common practical situations linked to existing regulatory frameworks.

\subsection{General scaling}\label{S:general.scaling}
The scalar introduced in the previous section could be used to refine various risk estimators. In this section, we show four simple examples of how this could be achieved. For other examples, we refer to \cite{PitSch2022}, \cite{BigTsa2016}, and \cite{GerTsa2011} where an indirect scaling based on risk unbiasedness idea was studied for specific risk measures and parametric settings; the examples presented therein could be easily adapted to our setting.

\begin{example}[Parametric 1-day VaR estimator under normality]\label{ex:1}
Let us assume that $X\sim \cN(\mu,\sigma)$, for unknown parameters $\mu\in\bR$ and $\sigma>0$. Moreover, let us fix the risk measure $\rho=\var_{\alpha}$ for a pre-defined confidence threshold $\alpha\in (0,1)$. In this setting, as pointed out it \eqref{eq:varNORM}, it is a common choice to use the standard Normal plug-in VaR estimator given by
\[
\hat\rho(\boldsymbol{X})=\hat\var_{\alpha}^2(\boldsymbol{X}):=-\left(\hat\mu(\boldsymbol{X})+\hat\sigma(\boldsymbol{X})\Phi^{-1}(\alpha)\right),
\]
Quite surprisingly, as shown in \cite{PitSch2016}, the estimator $\hat\var_{\alpha}^2$ is risk biased and there exists an unbiased alternative given by
\[
\hat\var_{\alpha}^3(\boldsymbol{X}):=-\bigg(\hat\mu(\boldsymbol{X})+\hat\sigma(\boldsymbol{X})\sqrt{\frac{n+1}{n}}t_{n-1}^{-1}(\alpha)\bigg),
\]
where $t_{n-1}$ corresponds to the cumulative distribution function of a student-$t$ distribution with $n-1$ degrees of freedom.

The proposed scaling framework  allows to recover $\hat\var_{\alpha}^3$ from $\hat\var_{\alpha}^2$ by applying the scalar introduced in \eqref{eq:c.star2}. Indeed, observe that 
\[
\hat\var^3(\boldsymbol{X})= -\hat\mu(\boldsymbol{X}) +c^*(\hat\var^2(\boldsymbol{X})+\hat\mu(\boldsymbol{X})),
\]
where
\begin{equation}\label{eq:scalar.gaussian}
\textstyle c^*=\frac{\sqrt{\frac{n+1}{n}}t_{n-1}^{-1}(0.01)}{\Phi^{-1}(0.01)}
\end{equation}
is the optimal scalar. Even if we do not know the theoretical value of the Gaussian scalar $c^*$ given in \eqref{eq:scalar.gaussian}, we can approximate it by solving the optimisation problem stated in \eqref{eq:c.star2}. We refer to Section~\ref{S:specific.scaling}, where exemplary scalar derivation algorithms are presented. For illustration, we present the size of $c^*$ as a function of the underlying sample size $n\in\bN$ as well as confidence threshold $\alpha\in (0,1)$, see Figure~\ref{F:ex1}.
\begin{figure}[htp!]
\begin{center}
\includegraphics[width=0.45\textwidth]{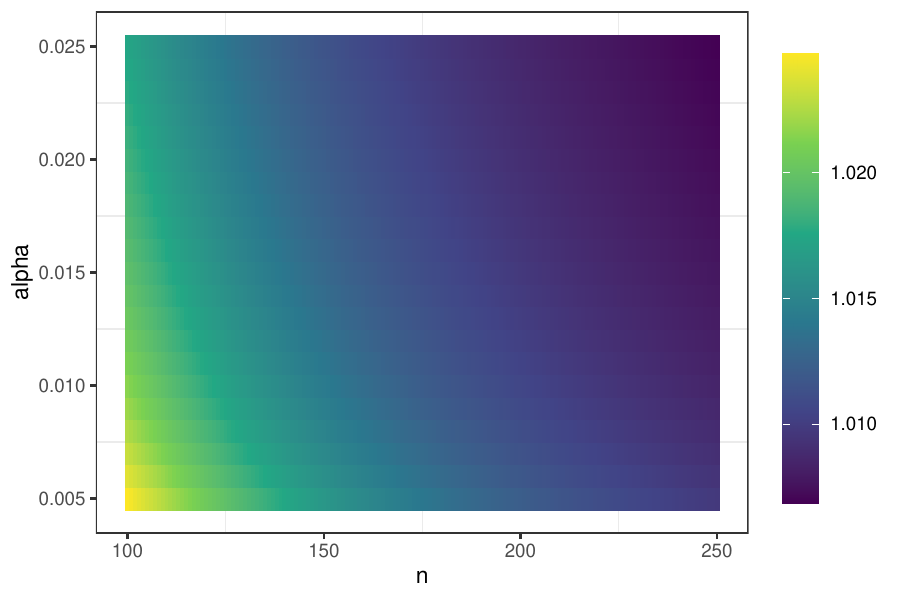}
\end{center}
\vspace{-0.5cm}
\caption{The heatmap presents the values of the Gaussian scalar $c^*$ from \eqref{eq:scalar.gaussian} for different sample sizes $n\in [100,250]$ and confidence thresholds $\alpha\in (0.5\%,2.5\%)$ under the setting described in Example~\ref{ex:1}. Note that the smaller the sample size and confidence threshold, the bigger the scalar.}\label{F:ex1}
\end{figure}

As expected, the smaller the sample size and confidence threshold, the bigger the scalar size. For $n=250$ and $\alpha=1\%$ the size of the adjustment is almost negligible and equal to approximately 1.008. Still, in the low sample setting, it might be material. For example, if we decrease the sample size to $n=50$ and keep $\alpha=1\%$, the scalar increases to approximately 1.044. In the extreme case, setting $n=30$ (annual scenarios) and $\alpha=0.05\%$ (economic capital confidence threshold), the scalar is equal to approximately 1.131.
\end{example}

\begin{example}[Parametric 1-day ES estimator under GPD]\label{ex:2}
The generalized Pareto distribution  (GPD) is often used to model  fat tails, see \cite{McN1999}. Let us fix $\rho=\textrm{ES}_{\alpha}$ for a pre-define confidence level $\alpha\in (0,1)$. Given the threshold parameter $u\in\bR$, shape parameter $\xi<1$, and scale parameter $\beta>0$, the true ES for $X\sim \textrm{GPD}(u,\xi,\beta)$ at confidence level $\alpha\in (0,1)$ is given by
\begin{equation}\label{eq:GPD.es}
\textrm{ES}_{\alpha}(X)=\frac{\var_{\alpha}(X)}{1-\xi}+\frac{\beta+\xi u}{1-\xi},
\end{equation}
where $\var_{\alpha}(X)=-u+\frac{\beta}{\xi}\left(\alpha^{-\xi}-1\right)$. The plug-in GPD estimator is constructed by plugging in the estimated values of $u$, $\xi$, and $\beta$ into \eqref{eq:GPD.es}. It has been shown in \cite{PitSch2022} that this approach often leads to a biased risk estimation. The proposed correction, based on estimated parameters modifications, can be embedded into our framework. It is enough to note that applying a linear change to scale parameter $\beta>0$ is equivalent to scaling. It should be emphasized that the value of the scalar depends on the underlying shape $\xi<1$, so that one should either pre-assume shape value, follow the bootstrap method, or use the robust version of the scalar introduced in~\eqref{eq:c.star3}. For example, let us assume that $X\sim \textrm{GPD}(0,\xi,\beta)$ for unknown scale $\beta>0$ and unknown $\xi\in \Gamma$, where $\Gamma$ is a (compact) subset of $(-\infty,1]$. Let $n=50$ and let us assume we want to estimate the risk of $\textrm{ES}_{\alpha}(X)$ at confidence level 5\%. Let the unscaled ES estimator be given by a plug-in estimator obtained from \eqref{eq:GPD.es}, i.e. let
\[
\hat\rho(\boldsymbol{X})=\frac{\hat\beta(\boldsymbol{X})}{1-\hat\xi(\boldsymbol{X})}\left(\frac{1}{\hat\xi(\boldsymbol{X})}\left(0.05^{-\hat\xi(\boldsymbol{X})}-1\right)+1\right),
\]
where $\hat\xi(\boldsymbol{X})$ and $\hat\beta(\boldsymbol{X})$ are estimators of $\xi$ and $\beta$, respectively, that are obtained using the PWM method. Let $c_{\xi}^*$ denote the scalar value defined in \eqref{eq:c.star} for a specific $\xi\in \Gamma$. Then, noting that the scalar value is invariant with respect to shifts in scale $\beta>0$, the robust version of the scalar could be set to 
\[
c^*=\max_{\xi\in \Gamma}c^*_{\xi}.
\]
Recall that while the robust estimator $c^*\cdot \hat\rho(\boldsymbol{X})$ could be risk biased, the bias sign would be negative, see~\eqref{eq:bias.negative}. To illustrate this, using the Monte Carlo method, we computed the values of $c^*_{\xi}$ for $\Gamma=[-0.5,0.5]$; they are presented in Figure~\ref{F:ex2}. 
\begin{figure}[htp!]
\begin{center}
\includegraphics[width=0.45\textwidth]{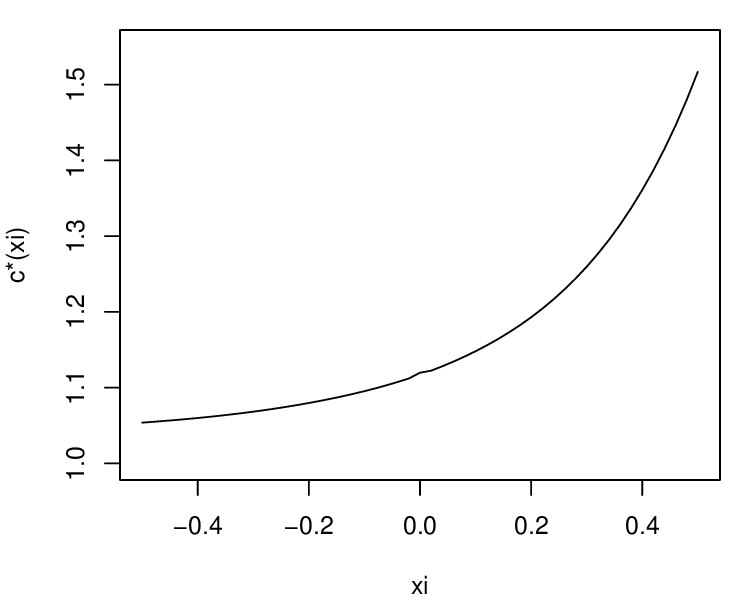}
\end{center}
\vspace{-0.5cm}
\caption{The value of $c^*_{\xi}$ for different $\xi\in [-0.5,0.5]$ under the setting described in Example~\ref{ex:2}. Note that the robust scalar $c^*$ is obtained by taking the maximal value of $c^*_{\xi}$.}\label{F:ex2}
\end{figure}

From the plot, one can deduce that $c^{*}\approx 1.5$ so that the estimator $\rho^*(\boldsymbol{X})=1.5\cdot \hat\rho(\boldsymbol{X})$ does not exhibit positive bias in the GPD setting, for $n=50$ and $\alpha=5\%$. We note that the scalar value sizes presented in Figure~\ref{F:ex2} are consistent with the parameter modification impacts presented in Figure 8 in \cite{PitSch2022}.
\end{example}

\begin{example}[Non-parametric 10-day overlapping VaR estimator under normality]\label{ex:3}
In this example, we show that the standard empirical 10-d VaR estimator based on overlapping P\&Ls is risk biased even if the underlying distribution is Gaussian with vanishing mean, and show how to properly rescale it. Let us fix $\rho=\var_{1\%}$ and assume that $X\sim \cN(0,\sqrt{10}\cdot \sigma)$ for some (unknown) parameter $\sigma>0$. Given a 1-day P\&L (i.i.d.) sample $(\tilde X_i)_{i=1}^{259}$, where $\tilde X_i\sim \cN(0,\sigma)$, we are constructing an overlapping 10-day P\&L sample $\boldsymbol{X}=(X_i)_{i=1}^{n}$ of length $n=250$ by setting $X_i=\tilde X_i+\ldots+\tilde X_{i+9}$, for $i=1,\ldots,n$. Next, assuming (naively) that the sample $\boldsymbol{X}$ is i.i.d.,  we use the standard empirical VaR estimator given in \eqref{eq:varNORM}, that is we set\footnote{Note that the usage of empirical quantile estimators for overlapping  P\&Ls sample is a common approach when estimating Stressed VaR, see Section 5.3 in \cite{EGIM}.}
\begin{equation}\label{eq:ex.3.1}
\hat\rho(\boldsymbol{X})=\hat\var^1(\boldsymbol{X})=-\tfrac{1}{2}(X_{(2)}+X_{(3)}).
\end{equation}
Before we show how to apply the scaling methodology, let us check the risk unbiasedness of $\hat\var^1(\boldsymbol{X})$ using a simple Monte Carlo (MC) simulation. Namely, for $\sigma=1$ and MC size $N=1\,000\,000$ we pick $N$ samples from $(X,\boldsymbol{X})$. Next, we construct a 10-day secured position sample of size $N$ by setting $S=X+\hat\rho(\boldsymbol{X})$,
and use this to approximate the value of \eqref{eq:risk.bias}. From the simulation, we get
\begin{equation}\label{eq:ex3.1}
\var_{1\%}\left(S\right)\approx 0.82,
\end{equation}
which clearly shows the presence of risk bias. To better quantify this effect let us consider the empirical performance measure $\hat T$ defined in \eqref{eq:T.perf2}. Note that in contrast to \eqref{eq:ex3.1}, the value of $\hat T(S)$ will be in fact independent of the (unknown) $\sigma$ parameter and will quantify the confidence threshold $\alpha\in (0,1)$ for which we get $\var_{\alpha}\left(S\right)=0$. Again, using simple Monte Carlo we get
\[
\hat T(S)\approx 1.8\%,
\]
In other words, for $n=250$, the empirical estimator \eqref{eq:risk.bias} based on the 10-day overlapping P\&L sample secures the underlying position for $\var_{1.8\%}$ and not for $\var_{1\%}$.

We are now ready to approximate the size of scalar $c$ that would make the underlying non-parametric estimator unbiased. Namely, using Monte Carlo analysis, we can easily solve optimisation problem \eqref{eq:c.star}. Again, see Section~\ref{S:specific.scaling}, where exemplary scalar derivation algorithms are presented.  In the Normal setting, the numerically obtained scalar value is equal to
\[
c^*\approx 1.14
\]
In other words, one should multiply \eqref{eq:ex.3.1} by 1.14 so that it properly secures $X$ for VaR at level 1\%. If not done, one should expect that approximately 1.8\% exceptions would occur in backtesting. Assuming a similar flaw in the 1-day framework, the regulatory backtest based on annual (daily) time series would on average result in $250*1.8\%=4.5$ exceptions rather than $250\cdot 1\%= 2.5$ exceptions. This would substantially increase the probability of getting into the {\it yellow traffic light} zone, see \cite{Bas1996} for details. In fact, for the unscaled estimator and a perfect i.i.d.normal setting, the probability of getting into the yellow traffic light zone would be equal to 46\%, which seems surprisingly high.\footnote{This number could be easily computed in {\bf R} by using formula~\verb|1-pbinom(4,250,0.018)|.} For completeness, we illustrate this by plotting the values of $\hat T(S(c))$ for different choices of $c\in [1,1.2]$, see Figure~\ref{F:ex3}.

\begin{figure}[htp!]
\begin{center}
\includegraphics[width=0.45\textwidth]{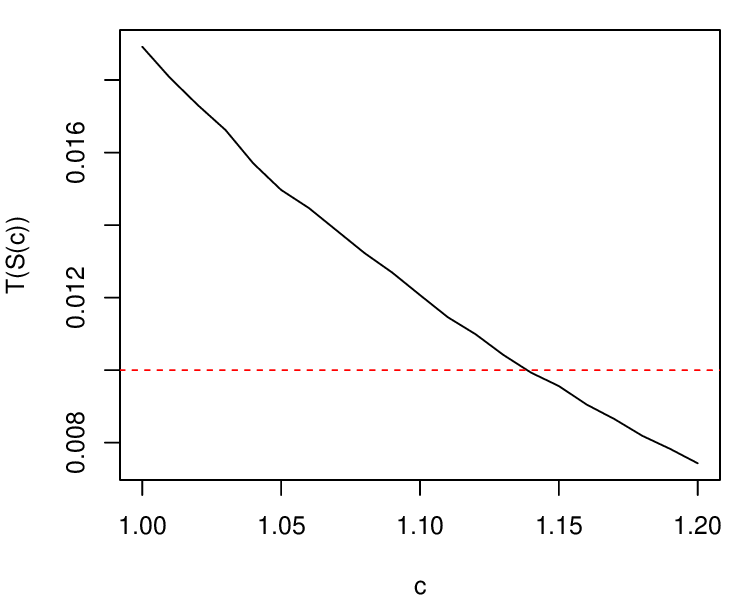}
\end{center}
\vspace{-0.5cm}
\caption{The plot presents the value of empirical performance measure $\hat T(S(c))$ as a function of $c$ under the setting described in Example~\ref{ex:3}. One can see that for $c^*\approx 1.14$ the value $\hat T(S(c^*))$ is close to the desired exception rate $\alpha=1\%$ denoted by the dashed red line. Also, the value of $\hat T(S(c)$ is monotone with respect to $c$, as expected.}\label{F:ex3}
\end{figure}
\end{example}

\begin{example}[Small sample non-parametric 1-day ES estimator under student-$t$ setting]\label{ex:4}

Let us fix $\rho=\textrm{ES}_{2.5\%}$ and assume that $X\sim t_{\nu_0}$ for some (unknown) shape parameter $\nu_0\in [5,\infty)$.\footnote{For simplicity, we do not consider an additional unknown scale parameter. If required, this could be introduced without impacting the results presented in this example.} Let us assume we want to estimate $\textrm{ES}_{2.5\%}(X)$ using a non-parametric estimator for $n=50$. Since $50  \cdot 2.5\%=1.25$, it is expected that only the first-order statistic of our sample will breach the ES-induced conditional threshold (i.e. $\var_{\alpha}(X)$). Consequently, the direct usage of the averaging statistic, as in \eqref{eq:esHS}, might be problematic due to a small sample size. Nevertheless, let us assume that we want to take the average of the three worst-case observations into account and define the unscaled estimator as
\[
\hat\rho(\boldsymbol{X})=-\tfrac{1}{3}(X_{(1)}+X_{(2)}+X_{(3)}).
\]
Then, under additional assumptions imposed on $X$ we can compute the value of a scalar that would make estimator $\hat\rho(\boldsymbol{X})$ unbiased. Since in our setting we assume that $X\sim t_{\nu_0}$ for an unknown parameter $\nu_0\in [5,\infty)$, the robust value of the scalar is equal to
\[
c^*=\sup_{\nu\in [5,\infty)}c_{\nu}^{*},
\]
where $c_{\nu}^{*}$ denotes the scalar value under $\nu\in [5,\infty)$ being the true parameter, see also~\eqref{eq:c.star3}. Using Monte Carlo method we computed the value of $c_{\nu}^{*}$ for different choices of $\nu\in [5,\infty)$, see Figure~\ref{F:ex4}. As expected, the scalar value is monotone with respect to the $\nu$ parameter. From the plot, one can deduce that $c^*\approx 1.55$ would lead to a non-positively biased scaled risk estimator. Note that this example is important also from the modeling perspective as ES depends on the whole left tail, which cannot be adequately captured using order statistics for a finite sample -- our method allows tail risk control with scalar adapted to the assumptions imposed on the tail structure.

\begin{figure}[htp!]
\begin{center}
\includegraphics[width=0.45\textwidth]{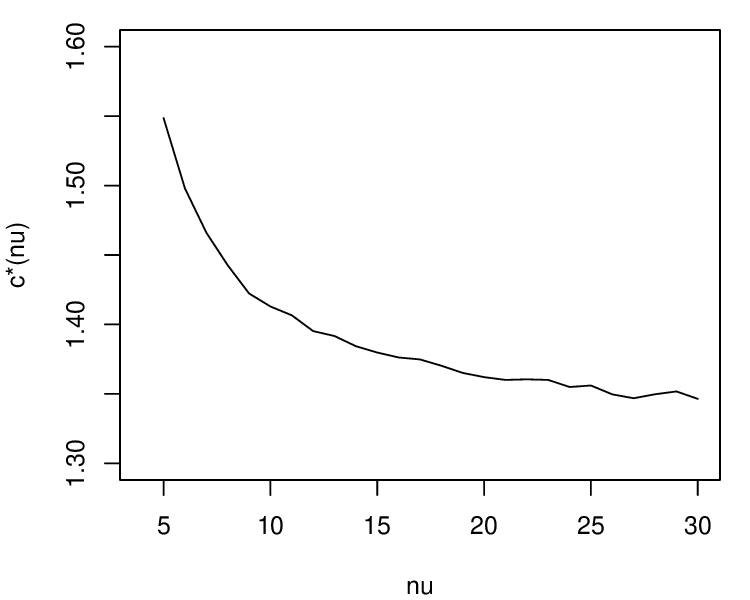}
\end{center}
\vspace{-0.5cm}
\caption{The value of $c^*_{\nu}$ for different $\nu\in [5,\infty)$ under the setting described in Example~\ref{ex:4}. Note that the robust scalar $c^*$ is obtained by taking the maximal value of $c^*_{\nu}$.}\label{F:ex4}
\end{figure}

\end{example}

\subsection{Time and confidence level scaling}\label{S:specific.scaling}
In this section, we focus on more practical situations where the scaling is applied. This refers to both time and confidence threshold scaling in various settings linked to the market risk model as well as its extensions, i.e. economic capital model and exotic risk factor model.

For consistency, in all examples, we pick a common set of distributions $\cF$ for robust scalar value evaluation. The family $\cF$ is based mainly on the $t$-Student family, see Table~\ref{T:ex5} for the exact list of considered distributions. We did this to better illustrate the intricacies built into various frameworks. We also use a common Monte Carlo (MC) size equal to $N=1\,000\,000$. For transparency, we split the scalar estimation procedure into two steps linked to confidence level and time horizon adjustments. Namely, in the first step, we assume that the P\&L sample holding period is the same as the underlying risk holding period, while in the second step, we further adjust the holding period. In this section, we provide a detailed scaling algorithm description for each example together with a detailed comment on the two-step procedure. Also, for brevity, given distribution $\bF$ and $k\in\bN$ we use $\textrm{Conv}(\bF,k)$ to denote $k$ times convolution of $\bF$. Note that a sample from $\textrm{Conv}(\bF,k)$ could be obtained by summing $k$ independent picks from $\bF$ or by taking a direct sample from the convoluted distribution. 

\begin{example}[1-day to 10-day VaR scaling]\label{ex:5}
In this example, we focus on 1-day to 10-day scaling for $\var_{1\%}$. We assume that we are given a 1-day P\&L sample $\boldsymbol{X}$ of size $n=250$ from some unknown zero-mean distribution and use it to estimate 1-day $\var_{1\%}$. For estimation, we use the non-parametric estimator defined in \eqref{eq:varHS}, i.e we set
\begin{equation}\label{eq:varHS2}
\hat\rho(\boldsymbol{X})=-\tfrac{1}{2}(X_{(2)}+X_{(3)}).
\end{equation}
Our goal is to use $\hat\rho(\boldsymbol{X})$ combined with an optimal scalar $c^*>0$ in such a way that the scaled estimator
\[
\hat\rho^*(\boldsymbol{X})=c^*\cdot \hat\rho(\boldsymbol{X})
\]
could be used to estimate 10-day $\var_{1\%}$. In other words, we want to find scalar $c$ that transfers 1-day $\var_{1\%}$ into 10-day $\var_{1\%}$. Please recall that time scaling is a common practice adopted by most financial institutions, see \cite{EBA2022}.

For simplicity, let us assume that the  distribution of the 10-day P\&L, is a convolution of ten independent 1-day P\&Ls.
Assume that the true (unknown) distribution of 1-day P\&L is from a family $\cF$. Then, for any specific choice of $\bF\in \cF$ we can approximate the value of $c^*$. For this purpose, we use  a simple Monte Carlo based approach that is outlined in Figure~\ref{F:code1}.
\begin{figure}[htp!]
\begin{tcolorbox}
{\footnotesize
{\bf Algorithm 1 (1-day to 10-day time shift scalar for VaR).}
}
{\footnotesize
\noindent Fix strong Monte Carlo size $M\in\bN$, distribution $\bF\in\cF$, VaR confidence threshold $\alpha\in (0,1)$, risk estimator $\hat\rho(\cdot)$ and sample size $n\in\bN$.
\begin{enumerate}[(1)]
\item Simulate $M$ times 1-day P\&L sample from $\bF^n$. Denote the $m$th realisation by $\boldsymbol{x}^m:=(x^m_1,\ldots,x^m_n)$.

\item Simulate $M$ times 10-day P\&L from $\textrm{Conv}(\bF,10)$. Denote the $m$th realisation as $\tilde x^m$.

\item For any $c>0$ construct the  {\it secured position sample} $\boldsymbol{S}(c):=(S^1(c),\ldots,S^M(c))$, where 
$
S^m(c):= \tilde x^m+c\hat\rho(\boldsymbol{x}^m).
$
\item Approximate $c^*$ by solving empirical equivalent of \eqref{eq:rho.star}. For example, this could be done by setting
\[
\textstyle c^*:=\argmin_{c>0}\left|S^{(\lfloor M\alpha \rfloor)}(c)\right|,
\]
where $S^{(k)}(c)$ is the $k$th order statistic of sample $\boldsymbol{S}(c)$.\footnote{this is a non-parametric estimator of $\var_{\alpha}$ for $\boldsymbol{S}(c)$.}
\end{enumerate}
}
\end{tcolorbox}
\caption{The algorithm for determining 1-d to 10-d time shift scalar for VaR.} 
\label{F:code1}
\end{figure}

Since the methodology presented in Figure~\ref{F:code1} takes care of both time and confidence level scaling, we decided to also present the results for an adjusted algorithm in which we first rescale \eqref{eq:varHS2} to be unbiased for 1-day and then apply algorithm from Figure~\ref{F:code1} to the rescaled version of \eqref{eq:varHS2}. Consequently, we obtain two scalars which could be multiplied to obtain the final scalar. For brevity, we refer to the first scalar as {\it confidence scalar} and to the second as {\it time scalar}. Note that the confidence scalar is easily derived using a slight modification of the algorithm presented in Figure~\ref{F:code1}. Namely, we need to simply replace 10-day P\&L with 1-day P\&L in step (2).

The results for various distributional choices are presented in Table~\ref{T:ex5}.
\begin{table}[htp!]
\centering
{\small
\begin{tabular}{lr|rr}
  \hline
 Distribution &  $c^*$ & Conf. $c^*$ & Time $c^*$ \\ 
  \hline 
  Laplace & 2.74 & 0.98 & 2.78 \\ 
  student-$t$ ($\nu=3$) & 2.99 & 0.98 & 3.06 \\ 
  student-$t$ ($\nu=5$) & 2.90 & 0.99 & 2.93 \\ 
  student-$t$ ($\nu=7$) & 2.94 & 0.99 & 2.98 \\ 
  student-$t$ ($\nu=10$) & 2.99 & 0.99 & 3.01 \\ 
  student-$t$ ($\nu=20$) & 3.06 & 0.99 & 3.09 \\ 
  student-$t$ ($\nu=30$) & 3.09 & 0.99 & 3.12 \\ 
  Normal & 3.14 & 0.99 & {\bf 3.16} \\ 
  GN(3) & 3.41 & 1.00 & 3.43 \\ 
  Cauchy & 9.17 & 0.93 & 9.91 \\ 
   \hline
\end{tabular}
}
\caption{The table presents the value of 1-d to 10-d scalars for empirical VaR estimator and various distributions under the setting described in Example~\ref{ex:5}; GN(3) stands for generalised normal distribution with shape parameter $\beta=3$. One can see that the adjusted scalar value could be materially different from $\sqrt{10}\approx 3.16$ even in a simple i.i.d. setting, depending on the underlying distribution assumption.
}\label{T:ex5}
\end{table}
As expected, for the Normal distribution the value of the time scalar is close to $\sqrt{10}\approx 3.16$ as for this case the square-root-of-time approach is a viable option,  see Section~\ref{S:square.root}. On the other hand, we see that for other distributions the time scalar could substantially differ from $\sqrt{10}$. In the extreme case of Cauchy distribution, the scalar value is equal to 10; note that our algorithm was able to correctly approximate the value of 10 which is due to the fact that $\textrm{Conv}(X,10) \sim 10\cdot X$ for independent Cauchy distributed random variables. 

It should be also noted that for student-$t$ distributions, which are considered more heavy-tailed than Gaussian, the scalar value is in fact smaller than $\sqrt{10}$ so that the application of square root of time might lead to over-conservative capital reserves. This phenomenon could be easily explained by the central limit theorem: if the underlying distribution has a finite variance, then the sum of i.i.d. random variables should tend to normal distribution -- while the variance of the sum would scale properly, the sum tail shape will change and be closer to normal. In consequence, the naive square-root-of-time scalar usage would result in an over-conservative value, as the scalar should be reduced by the ratio of the corresponding quantiles. For example, for the student-$t$ distribution with parameter $v=5$ and $\alpha=1\%$ we have
\begin{equation}\label{eq:square.adjust}
\textstyle \sqrt{10} \frac{\Phi^{-1}(\alpha)}{t^{-1}_{v}(\alpha)}\approx 2.82
\end{equation}
which is relatively close to the value 2.93 obtained in Table~\ref{T:ex5}; the values are not equal to each other, as the quantile ratio in \eqref{eq:square.adjust} is based on the limit CLT argument. This shows that our method can automatically handle certain subtleties built into the estimation processes.

Finally, we note that in this example, the confidence scaling is almost negligible due to a relatively large sample size $n=250$ and moderate confidence threshold value $\alpha=1\%$.

\end{example}

\begin{example}[Exotic risk VaR scaling]\label{ex:6}
In this example, we focus on a situation, when the size of the statistical sample is low when compared to the underlying risk confidence threshold and we get the holding period mismatch induced by limited data availability. This is typically linked to positions depending on exotic risk factors whose values cannot be directly inferred from the available market data or no full stress period data is available. Those types of risks are typically modeled via risk add-on frameworks, such as {\it Risks not in VaR} (RNIV), {\it Risks not in the model engines} (RNIME), or {\it Non-modellable risk factors} (NMRF), see~\cite{EGIM}, \cite{PRA2018}, and \cite{EBA.NMRF} for regulatory backgrounds. In this situation, we want to provide a capital reserve whose value is consistent with the base risk metric having only a limited dataset at hand. For simplicity, we focus on the VaR case; similar logic could be used in the case of ES.

Let us assume that our base metric is 10-day $\var_{1\%}$ but we are given a statistical sample  consisting of twelve 1-month P\&L observations denoted as $\boldsymbol{X}=(X_1,\ldots,X_{12})$. For simplicity, let us assume that we want to use the worst-case outcome for capital estimation, i.e. we set the non-scaled 10-day $\var_{1\%}$ estimator to
\[
\hat\rho(\boldsymbol{X})=-X_{(1)}.
\]
As usual, we look for a constant $c^*$ that makes $\hat\rho$ risk unbiased. Note that the constant needs to account for time horizon downscaling (from 1-month to 10-day) as well as risk confidence threshold upscaling as 12 observations are not sufficient to meaningfully estimate tail 1\% quantile in a non-parametric way. To do this, we have to impose some additional assumptions on our framework. 

As in the previous example, let us assume that the 10-day P\&Ls distribution belongs to class $\cF$. Also, let us assume that the given 1-month P\&Ls are i.i.d. and are a sum of two independent 10-day P\&Ls. In that case, we can estimate $c^*$ using the robust approach from~\eqref{eq:c.star3} similarly as in Example~\ref{ex:4}. Using the algorithm presented in Figure~\ref{F:code2} we can compute the scalar for any $\bF\in\cF$; note that the scalar is invariant to variance changes, so it is sufficient to consider unit variance distributions. 

\begin{figure}[htp!]
\begin{tcolorbox}
{\footnotesize
{\bf Algorithm 2 (VaR scaling for monthly data).}
}
{\footnotesize
\noindent Fix strong Monte Carlo size $M\in\bN$, distribution $\bF\in\cF$, VaR confidence threshold $\alpha\in (0,1)$, risk estimator $\hat\rho(\cdot)$ and sample size $n\in\bN$.
\begin{enumerate}[(1)]
\item Simulate $M$ times 20-day P\&L sample from $\boldsymbol{X}\sim (\bF*\bF)^n$. Denote the $m$th realisation by $\boldsymbol{x}^m:=(x^m_1,\ldots,x^m_n)$.

\item Simulate $M$ times 10-day P\&L sample from $\boldsymbol{X}\sim \bF$. Denote the $m$th realisation (sum) as $\tilde x^m$.

\item For any $c>0$ construct the  {\it secured position sample} $\boldsymbol{S}(c):=(S^1(c),\ldots,S^M(c))$, where 
$
S^m(c):= \tilde x^m+c\hat\rho(\boldsymbol{x}^m).
$
\item Approximate $c^*$ by solving empirical equivalent of \eqref{eq:rho.star}. For example, this could be done by setting
\[
\textstyle c^*:=\argmin_{c>0}\left|S^{(\lfloor M\alpha \rfloor)}(c)\right|,
\]
where $S^{(k)}(c)$ is the $k$th order statistic of sample $\boldsymbol{S}(c)$.
\end{enumerate}
}
\end{tcolorbox}
\caption{The algorithm for determining a VaR scalar when only limited monthly data is available. The robust version of the scalar could be obtained by taking maximum over all distributions in $\cF$.} 
\label{F:code2}
\end{figure}

As before, we split the scalar estimation process into two steps. The confidence scalar is easily derived using a slight modification of the algorithm presented in Figure~\ref{F:code2}. Namely, we need to use monthly P\&L in step (2). The resulting values are presented in Table~\ref{T:ex6}. 
\begin{table}[htp!]
\centering
{\small
\begin{tabular}{lr|rr}
  \hline
Distribution & c* & Conf. c* & Time c* \\ 
  \hline
  Laplace & 1.80 & 2.49 & 0.72 \\ 
  student-$t$ (nu=3) & 1.94 & 2.77 & 0.71 \\ 
  student-$t$ (nu=5) & 1.71 & 2.39 & 0.71 \\ 
  student-$t$ (nu=7) & 1.64 & 2.29 & 0.71 \\ 
  student-$t$ (nu=10) & 1.59 & 2.22 & 0.71 \\ 
  student-$t$ (nu=20) & 1.53 & 2.16 & 0.71 \\ 
  student-$t$ (nu=30) & 1.52 & 2.13 & 0.71 \\ 
  student-$t$ (nu=50) & 1.51 & 2.13 & 0.71 \\ 
  student-$t$ (nu=100) & 1.50 & 2.12 & 0.71 \\ 
  Normal & 1.49 & 2.10 & 0.71 \\ 
  GN(3) & 1.40 & 2.02 & 0.69 \\ 
  Cauchy & 4.51 & 9.04 & 0.50 \\ 
%  Cauchy & 6.00 & 12.01 & 0.50 \\ 
   \hline
\end{tabular}
}
\caption{The table presents the value of scalars for exotic risk factor empirical estimator and various distributions under the setting described in Example~\ref{ex:6}; GN(3) stands for generalised normal distribution with shape parameter $\beta=3$. One can see that while the time scalar is rather stable, the confidence scalar depends strongly on the underlying distribution.}\label{T:ex6}
\end{table}
\end{example}

As expected, in this case, the impact of the confidence scalar on the final scalar is much more profound due to the relatively small sample size. Note that while time scalars are in fact smaller than one as we down-scale the risk, the confidence scalars are high due to the relatively small size of the sample. Also, it should be noted that while time scaling looks robust, the size of the confidence scalar is sensitive to the choice of the underlying distribution.

\begin{example}[Economic capital ES risk scaling]\label{ex:7}
In the last example, we show how the method introduced in this paper could be used to help estimate economic capital risk within ES framework. Namely, given the limited P\&L sample, we want to scale both the confidence threshold and time horizon to a more extreme setting. 

While the standard market risk is estimated for shorter holding periods, e.g. 10-day, and relatively non-extreme tail thresholds, e.g. using $\var_{1\%}$ or $\textrm{ES}_{2.5\%}$, the economic capital is typically based on longer holding periods, e.g. 1 year, and extreme risk thresholds, e.g. 0.1\%; see Chart 30 and Chart 31 in \cite{ICAAP.pr} for more details on banks' ICAAP practices.

For simplicity, assume that the economic capital holding period is equal to 250 days and the underlying economic capital risk measure is $\textrm{ES}_{0.1\%}$. Moreover, let us assume that we want to utilise the Pillar 1 framework in which 10-day non-overlapping P\&Ls could be produced for the last 30 years of data giving us a total of $n=750$ historical observations denoted by $\boldsymbol{X}=(X_1,\ldots,X_n)$. We also assume that we are given an (unscaled) risk estimator defined as
\[
\hat \rho(\boldsymbol{X})=-\frac{1}{6}\sum_{i=1}^{6}X_{(i)},
\]
which could be seen as an empirical estimator of 10-day $\textrm{ES}_{0.8\%}$ as $750*0.8\%=6$. We decided to use a higher (initial) confidence threshold for the unscaled estimator as for target 0.1\% thresholds in the non-parametric setting we have $750\cdot 0.1\% =0.75$, and the usage of a single worst-case observation might be non-robust. The objective of this example is to show how to scale $\hat\rho$ so that it represents economic capital risk for the annual holding period and confidence threshold 0.1\%.
 We follow a standard i.i.d. setting and use a similar scheme as in Example~\ref{ex:5}.

An exemplary estimation algorithm for a given distribution $\bF\in\cF$ is presented in Figure~\ref{F:code3}. The confidence scalar is obtained by modifying the estimation algorithm in step (1), in which 10-day P\&L is replaced by 250-day P\&L. 
\begin{figure}[htp!]
\begin{tcolorbox}
{\footnotesize
{\bf Algorithm 3 (Economic Capital risk scaling for ES).}
}
{\footnotesize
\noindent Fix (big) strong Monte Carlo size $M\in\bN$, distribution $\bF\in\cF$, ES confidence threshold $\alpha\in (0,1)$, risk estimator $\hat\rho(\cdot)$ and sample size $n\in\bN$.
\begin{enumerate}[(1)]
\item Simulate $M$ times 10-day P\&L sample from $\boldsymbol{X}\sim \bF^n$. Denote the $m$th realisation by $\boldsymbol{x}^m:=(x^m_1,\ldots,x^m_n)$.

\item Simulate $M$ times 250-day P\&L from $\textrm{Conv}(X,25)$. Denote the $m$th realisation as $\tilde x^m$.

\item For any $c>0$ construct the  {\it secured position sample} $\boldsymbol{S}(c):=(S^1(c),\ldots,S^M(c))$, where 
$
S^m(c):= \tilde x^m+c\hat\rho(\boldsymbol{x}^m).
$
\item Approximate $c^*$ by solving empirical equivalent of \eqref{eq:rho.star}. For example, this could be done by setting
\[
\textstyle c^*:=\argmin_{c>0}\left|\tfrac{1}{\lfloor M\alpha\rfloor}\sum_{i=1}^{\lfloor \alpha M \rfloor}S^{(i)}(c)\right|,
\]
where $S^{(k)}(c)$ is the $k$th order statistic of sample $\boldsymbol{S}(c)$.\footnote{ this is a non-parametric estimator of $\textrm{ES}_{\alpha}$ for $\boldsymbol{S}(c)$.
}
\end{enumerate}
}
\end{tcolorbox}
\caption{The algorithm for determining an ES scalar in the economic capital setting. The robust version of the scalar could be obtained by taking maximum over all distributions in $\cF$.} 
\label{F:code3}
\end{figure}
The results are presented in Table~\ref{T:ex7}. In this case both confidence and time scalar depend on the underlying distribution. That said, it should be noted that the inverse relation could be observed, i.e. the bigger the confidence scalar the smaller the time scalar.

\begin{table}[htp!]
\centering
{\small
\begin{tabular}{lr|rr}
  \hline
Distribution & c* & Conf. c* & Time c* \\ 
  \hline
 Laplace & 5.16 & 1.47 & 3.51 \\ 
  student-$t$ (nu=3) & 7.78 & 2.27 & 3.43 \\ 
  student-$t$ (nu=5) & 5.62 & 1.67 & 3.35 \\ 
  student-$t$ (nu=7) & 5.64 & 1.55 & 3.65 \\ 
  student-$t$ (nu=10) & 5.87 & 1.47 & 3.99 \\ 
  student-$t$ (nu=20) & 6.07 & 1.35 & 4.50 \\ 
  student-$t$ (nu=30) & 6.19 & 1.32 & 4.70 \\ 
  student-$t$ (nu=50) & 6.23 & 1.30 & 4.78 \\ 
  student-$t$ (nu=100) & 6.34 & 1.29 & 4.93 \\ 
  Normal & 6.26 & 1.27 & 4.95 \\ 
  GN(3) & 7.09 & 1.20 & 5.92 \\ 
   \hline
\end{tabular}
}
\caption{The table presents the value of scalars for economic capital estimator under various distributions under the setting described in Example~\ref{ex:7}; GN(3) stands for generalised normal distribution with shape parameter $\beta=3$. One can see that both confidence and time scalars depend on the underlying distribution.}\label{T:ex7}
\end{table}
\end{example}

\section{Empirical analysis}\label{S:numerical}
In this short section, we present a simple empirical study that illustrates the impact of the scaling method choice on capital reserve adequacy. We do this to illustrate how the choice of the underlying scaling methodology could impact capital reserve conservativeness and backtesting performance.  We want to emphasize that the analyses in this section provide only a top-level illustration and are based on simple benchmark scaling methodologies. Also, instead of developing an omnibus scaling method (which in practice is almost impossible), we want to show that the inclusion of risk unbiasedness concept in the scalar calibrating could lead to model improvement. Rather than being a simple alternative scaling methodology, our method constitutes a statistical framework that can be used to refine or verify any pre-existing plug-in scaling methodology; see Example~\ref{ex:2} for application to extreme value theory methods.

In this section, we consider various parametric scaling methods outlined in Section~\ref{S:scaling} and Section~\ref{S:examples} as well as two exemplary data-driven scalar estimation methodologies that are based on Example~\ref{ex:6} setup.

For simplicity, we perform the calculations using return rates of multiple (equity) portfolios. We use data from {\it $q$-factors data library}, see \cite{HouXueZha2022,HouXueZha2023}. Namely, we take weekly value-weighted returns of testing portfolios based on 188 anomalies with 1-way sorts. In total, we consider 1853 different portfolios and perform a backtesting exercise on data from 2010 to 2021, i.e. we take the last $N:=625$ observations for each portfolio, see Figure~\ref{F:empirical1} for exemplary portfolio data illustration.

\begin{figure}[htp!]
\begin{center}
\includegraphics[width=0.5\textwidth]{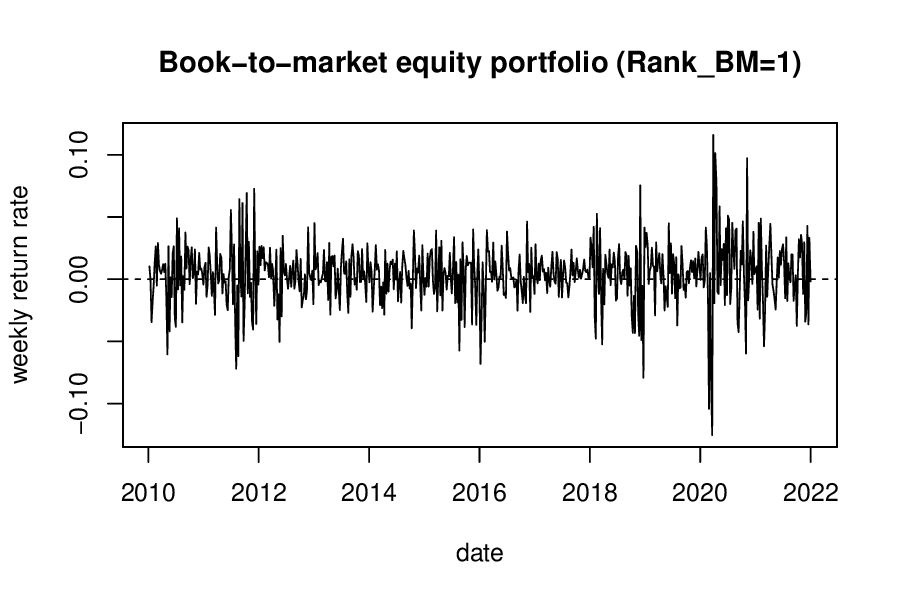}
\end{center}
\vspace{-0.5cm}
\caption{The plot presents 2010-2021 weekly return rates for an exemplary single {\it testing portfolio} based on Book-to-market equity anomaly for the first decile sorting. In total, 1853 portfolios are considered.}\label{F:empirical1}
\end{figure}

For the analysis, we set $\var_{1\%}$ as the underlying reference risk measure and consider two holding periods: one week (5-day) and two weeks (10-day). In both cases, we estimate the risk using the annual time series of non-overlapping weekly returns, i.e. we set the estimation sample size to $n:=50$. 
Then, we estimate risk reserves by combining first-order statistics with various scaling methods and check the output performance by counting the exception rates for the standard rolling windows backtest.

Let us now describe the testing framework in more detail; for simplicity, we focus on the 5-day holding period case. Given a single portfolio weekly returns $\boldsymbol{R}:=(R_1,\ldots,R_{N})$, we set the $t$th backtesting day {\it unscaled VaR risk estimator} as the $1$st order statistic, that is, we set 
\[
\hat\rho_t := -\min(R_t,\ldots,R_{n+(t-1)}), \quad t\in\{1,N-n\}.
\]
The corresponding $t$th backtesting day 5-day {\it realised return} is given by
\begin{equation}\label{eq:5day}
X_t := R_{n+t}, \quad t\in\{1,N-n\}.
\end{equation}
Now, given scalar value $c\in \bR$ we calculate the $t$th backtesting day {\it scaled secured position} as
\[
S_t(c):=X_t +c\cdot\hat\rho_t,\quad t\in\{1,N-n\}.
\]
Finally, as in Section~\ref{S:backtesting}, we consider the exception rate statistic given by
\[
\hat{T}(c):=\sum_{t=1}^{N-n}\frac{\1_{\{S_t(c)\leq 0\}}}{N-n}
\]
and use it as a key performance measure that evaluates estimated capital reserve conservativeness.
We repeat this procedure for each portfolio in scope and various risk scaling methodologies. We also follow the same framework for the 10-day holding period setting. In the case of 10-day holding period, \eqref{eq:5day} is replaced by $X_t=R_{t+n}+R_{t+n+1}$, i.e. we linearly aggregate 5-days returns to obtain (overlapping) 10-day return. The exact list of scalar derivation methods considered in this section together with scalar values is presented in Table~\ref{T:emp1}.

\begin{table}[htp!]
\centering
{\small
\begin{tabular}{ll|rr}
  \hline
\# &Scaling method & 5-day  & 10-day \\ 
  \hline
1&  Non-Scaled + SQRT rule & 1.00 & 1.41 \\ 
2& Normal Ratio + SQRT rule & 1.13 & 1.60 \\ 
 3& Normal unbiased & 1.15 & 1.62 \\ 
 4&  student-$t$ unbiased ($\nu=6$)  & 1.23 & 1.70 \\
  \hline
 5&  Empirical unbiased (student-$t$)  & (1.25) & (1.73)\\
 6&  Empirical unbiased (Non-par) & (1.29) & (1.74)\\ 
   \hline
\end{tabular}
}
\caption{The table presents the summary of the scaling methods considered in the empirical analysis. While fixed scalar values are used in the first four methods, the last two methods are based on data-driven scalars that are fit on portfolio-level -- mean values are presented.}\label{T:emp1}
\end{table}
The first method (\#1) is introduced to show the baseline method when no scaling is applied in the 5-holding period setting; for better comparability, we decided to apply square-root-of-time scalar of size $\sqrt{2}\approx 1.41$ when shifting the risk from 5-day holding period to 10-day holding period. 

The second method (\#2) assumes that the first-order statistic could be used to (empirically) approximate any quantile smaller than $1/50 =2\%$ and is based on the quantile scaling approach. Namely, we use Gaussian distribution based conservative scalar of size
\[
d_1(0.01,0.02)=\frac{\Phi^{-1}(0.01)}{\Phi^{-1}(0.02)}\approx 1.13.
\]
for the 10-day holding, we rescale 5-day estimator using the square-root-of-time approach. 

The third (\#3) and fourth (\#4) methods are based on the parametric risk unbiased scaling method introduced in Example~\ref{ex:6}.  The scalar values are derived using a parametric approach in which we assume that the underlying distribution is either Normal (\#3) or student-$t$ with shape parameter $\nu=6$ (\#4). 

The last two methods (\#5 and \#6) are data-driven. In the first case (\#5), for each portfolio, we estimate the number of degrees of freedom based on past data (up to 2010) and then use the parametric method from Example~\ref{ex:6}. The value in Table~\ref{T:emp1} corresponds to the mean scalar value from all portfolios in scope. In the second case (\#6), we use a non-parametric approach in which we calculate exception rate statistic on past data using a rolling window backtest and fit the scalar in such a way, that this statistic is equal to the target (1\%) level. Again, Table~\ref{T:emp1} value corresponds to the mean scalar value for all portfolios in the scope of the analysis.

It should be noted that while the last two methods (\#5 and \#6) are data-driven, they are not adaptive, i.e. we fit the empirical scalars using past data (pre-2010) and do not adjust the scalars during the backtest, e.g. using data available up to a specific backtest day. We do this so that those estimators can be more directly compared to the previous methods. Note that in a production environment, the scalar could be re-calibrated periodically to increase estimation efficiency. Since the method is not resource-consuming, the re-calibration could be made even daily.

The results of the backtest are presented in Table~\ref{T:emp2}. 
\begin{table}[htp!]
\centering
{\small
\begin{tabular}{l|rr|rr|l}
 \multirow{2}{1em}{\#} &  \multicolumn{2}{c|}{5D} & \multicolumn{2}{c|}{10D} & \multirow{2}{1em}{Best}\\
 & mean & sd  & mean & sd &\\ 
  \hline
  1  & 2.21 & 0.32 & 1.93 & 0.40 & \phantom{0}1\%  \\ 
  2  & 1.59 & 0.24 & 1.49 & 0.34 &17\%\\ 
  3  & 1.52 & 0.24 & 1.45 & 0.33 &\phantom{0}6\%\\ 
  4  & 1.24 & 0.22 & 1.33 & 0.30 &{\bf 38\%}\\ 
  5  & 1.19 & 0.22 & 1.29 & 0.27 &15\%\\ 
  6  & {\bf 1.07} & 0.25 & \bf{1.28} & 0.27 &23\%\\ 
\end{tabular}
}
\caption{The table summarises the results of the empirical backtesting exercise for 1853 different portfolios and 6 scaling methods from Table~\ref{T:emp1}. Results for both 5-day and 10-day holding periods are presented. Column {\it mean} presents the mean of exception rates and column {\it sd} presents the corresponding standard deviation from the mean. Column {\it Best} indicates the aggregated percentage of portfolios for which the scaling method's exception rate was closest to the target 1\% value in absolute terms.}\label{T:emp2}
\end{table}
 We see that empirical scaling method \#6 given exception rates which are (on average) closest to the reference value $\alpha=1\%$. On the other hand, the classical methods based on the square-root-of-time rule and normal quantile ratios (\#2 and \#3) lead to relatively large exception rates. One can also note that for method \#4, the resulting exception rate is most frequently closest to 1\% by looking at {\it Best} column. 
 To better understand this phenomenon and the interaction between method \#4 and \#6, we decided to plot empirical exception rate densities, see Figure~\ref{F:empirical2} and Figure~\ref{F:empirical3}.  For each method, the empirical density is constructed by taking 1853 exception rates and smoothing the corresponding histogram.

 \begin{figure}[htp!]
\begin{center}
\includegraphics[width=0.4\textwidth]{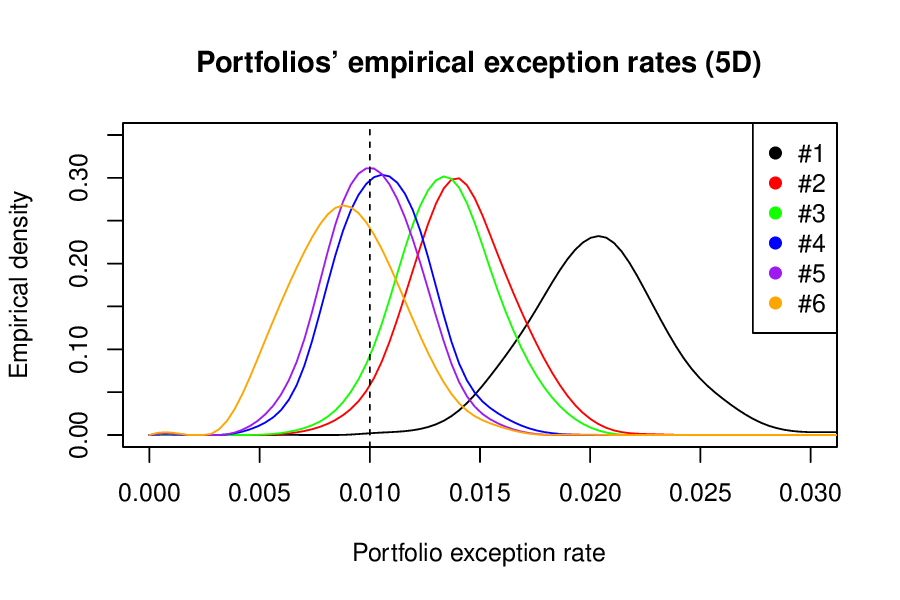}
\end{center}
\vspace{-0.5cm}
\caption{The plot presents 5-day holding period empirical exception rate densities for all portfolios. We see that while exception rates for methods \#1-\#3 are typically bigger than the target $\alpha=1\%$ exception rates, methods \#4-\#6 give results closer to the desired level.}
\label{F:empirical2}
\end{figure}

\begin{figure}[htp!]
\begin{center}
\includegraphics[width=0.4\textwidth]{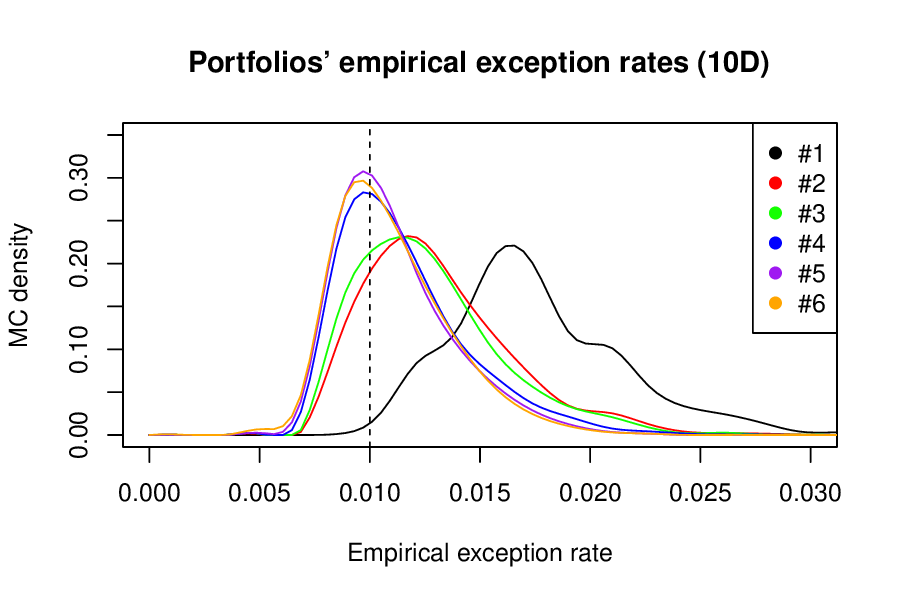}
\end{center}
\vspace{-0.5cm}
\caption{The plot presents 10-day holding period empirical exception rate densities for all portfolios. While the top-level conclusions remain the same as in Figure~\ref{F:empirical2}, we note that exception rates for methods \#1-\#3  are slightly closer to 1\% due to the square-root-of-time rule conservativeness.}\label{F:empirical3}
\end{figure}

 From the plots, we observe that methods \#1-\#3 tend to produce exception rates that are bigger than the target $\alpha=1\%$ confidence threshold. On the other hand, methods \#4-\#6 produce histograms that are more centered around 1\% value. Closer looks into 5-day holding period data explains why the mean exception rate for method \#1 is closer to 1\% but does not give the closest exception rates for all portfolios, cf. column {\it Best} in Table~\ref{T:emp2}. Namely, this is due to the non-parametric nature of method \#1 which produces more conservative scalar values which in turn break the symmetry in the right tail of the empirical exception rate density. It should be also noted that the holding period change from 5-days to 10-days impacts the shape of the empirical densities. In particular, the left tails seem to be heavier, which results in higher exception rates in almost all considered cases, see Table~\ref{T:emp2}.

Finally, to check all methods' stability we also re-run the exercise on two simulated data sets. In the first case, we replaced all portfolio data points with i.i.d. normal observations, while in the second case, we use i.i.d. student-$t$ observations with $\nu=6$. See Figure~\ref{F:empirical4} for the obtained empirical densities. As expected, in the first case, methods based on normal assumption, i.e. \#2 and \#3 had the best results and were closely followed by non-parametric method \#6. The mean exception rate for method \#2, \#3, and \#6, was equal to $1.11\%$, $1.03\%$, and $1.05\%$, respectively. In the second case, methods \#4, \#5, and \#6 had the best performance, while methods \#2 and \#3 led to slight risk underestimation. The mean exception rate for \#4, \#5, and \#6 was equal to 0.99\%, 1.00\%, and 1.02\%, respectively.

\begin{figure}[htp!]
\begin{center}
\includegraphics[width=0.4\textwidth]{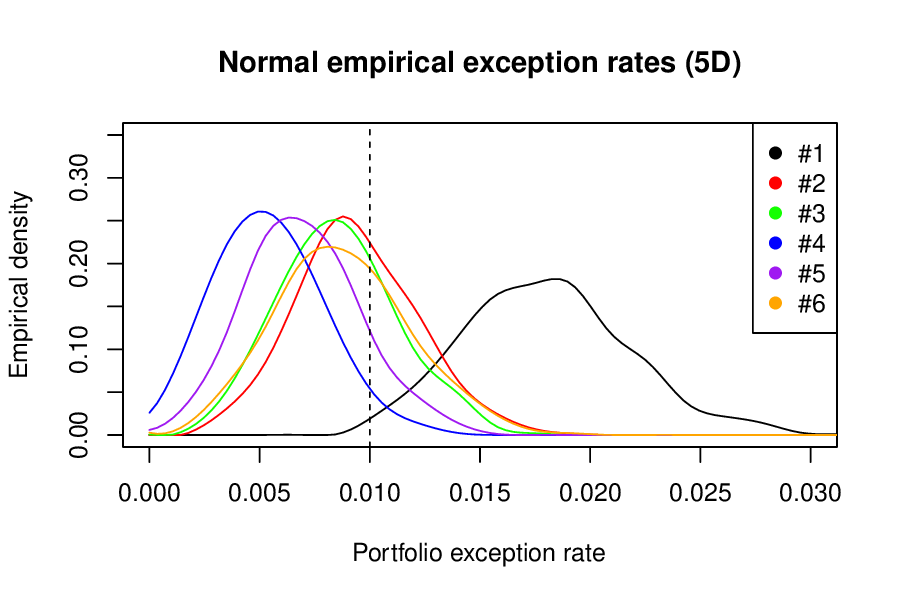}\\
\includegraphics[width=0.4\textwidth]{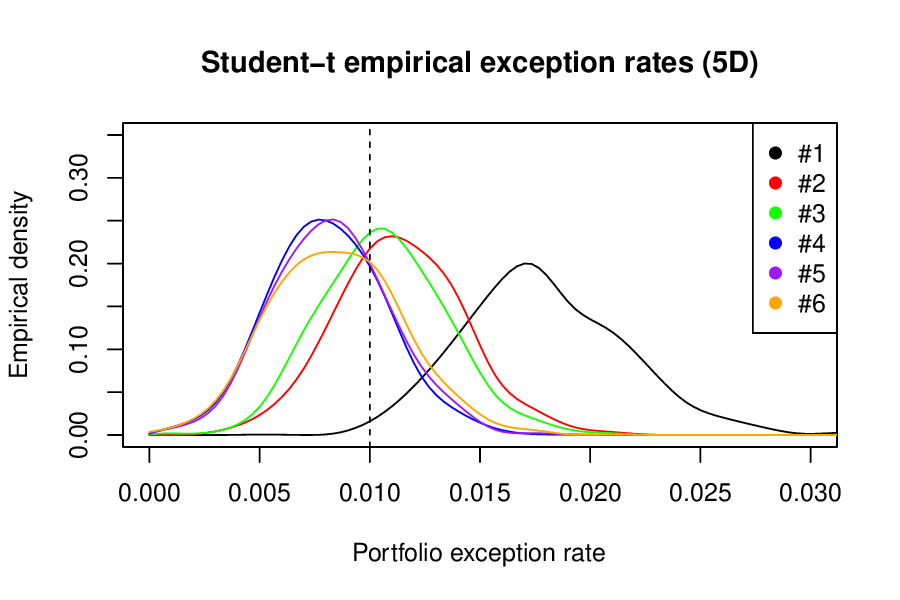}
\end{center}
\vspace{-0.5cm}
\caption{The plot presents 5-day holding period empirical exception rate 
densities when market portfolio data points are replaced with normal i.i.d. sample (top plot) and student-$t$ i.i.d. sample for $\nu=6$ (bottom plot).}\label{F:empirical4}
\end{figure}

\section*{Disclaimer and acknowledgements}
The views and opinions expressed in this article are the authors' own and do not necessarily reflect the views and opinions of their current or past employers. The authors report no conflicts of interest. The authors alone are responsible for the content and writing of the paper.

Marcin Pitera and \L{}ukasz Stettner acknowledge support from the National Science Centre, Poland, via project 2020/37/B/HS4/00120.

 {\small
 \bibliographystyle{agsm}
 \bibliography{bibliography}

@article{EmbKauPat2005,
  title={Strategic long-term financial risks: Single risk factors},
  author={Embrechts, P. and Kaufmann, R. and Patie, P.},
  journal={Computational optimization and applications},
  volume={32},
  pages={61--90},
  year={2005},
  publisher={Springer}
}

@article{BruKau2007,
	author = {Brummelhuis, R. and Kaufmann, R.},
	date-added = {2022-12-07 17:31:25 +0100},
	date-modified = {2022-12-07 17:31:50 +0100},
	journal = {The Journal of Risk},
	number = {4},
	pages = {39},
	publisher = {Incisive Media Limited},
	title = {{Time-scaling of value-at-risk in GARCH(1,1) and AR(1)-GARCH(1,1) processes}},
	volume = {9},
	year = {2007}}

@article{Zha2022,
	author = {Zhan, J.},
	date-added = {2022-12-07 17:30:29 +0100},
	date-modified = {2022-12-07 17:30:36 +0100},
	journal = {Available at SSRN 4206382},
	title = {Performance of Scaling by the Square-root-of-time rule for FRTB Internal Model Expected Shortfall},
	year = {2022}}

@article{SkoErdChe2011,
	author = {Skoglund, J. and Erdman, D. and Chen, W.},
	journal = {The Journal of Risk Model Validation},
	number = {4},
	pages = {17--26},
	publisher = {Incisive Media Limited},
	title = {On the time scaling of value-at-risk with trading},
	volume = {5},
	year = {2011}}

@article{BraDiM2021,
	author = {Brandi, G. and Di Matteo, T.},
	journal = {The European Journal of Finance},
	volume = {28},
    pages = {1361--1382},
    number = {13-15},
	publisher = {Taylor \& Francis},
	title = {On the statistics of scaling exponents and the multiscaling value at risk},
	year = {2021}}

@article{DowBlaCai2004,
	author = {Dowd, K. and Blake, D. and Cairns, A.},
	journal = {The Journal of Risk Finance},
    volume = {5},
    number = {2},
    pages = {52--57},
	publisher = {Emerald Group Publishing Limited},
	title = {Long-term value at risk},
	year = {2004}}

@article{HamEns2010,
	author = {Hamidieh, K. and Ensor, K. B.},
	date-added = {2022-12-07 17:26:35 +0100},
	date-modified = {2022-12-07 17:26:50 +0100},
	journal = {Journal of Risk Management in Financial Institutions},
	number = {4},
	pages = {380--391},
	publisher = {Henry Stewart Publications},
	title = {A simple method for time scaling value-at-risk: Let the data speak for themselves},
	volume = {3},
	year = {2010}}

@article{DanDeV2000,
	author = {Danielsson, J. and De Vries, C. G.},
    journal = {Annales d'Économie et de Statistique},
    volume = {60},
    pages = {239--270},
    title = {Value-at-Risk and Extreme Returns},
    year = {2000}	
    }

@article{SpaDubTer2014,
	author = {Spadafora, L. and Dubrovich, M. and Terraneo, M.},
	date-added = {2022-12-07 17:23:21 +0100},
	date-modified = {2022-12-07 17:23:38 +0100},
	journal = {arXiv preprint arXiv:1408.2462},
	title = {Value-at-Risk time scaling for long-term risk estimation},
	year = {2014}}

@article{SaaRah2008,
	author = {Saadi, S. and Rahman, A.},
	date-added = {2022-12-07 17:22:40 +0100},
	date-modified = {2022-12-07 17:22:54 +0100},
	journal = {Journal of International Financial Markets, Institutions and Money},
	number = {3},
	pages = {272--289},
	publisher = {Elsevier},
	title = {Evidence of non-stationary bias in scaling by square root of time: Implications for Value-at-Risk},
	volume = {18},
	year = {2008}}

@article{WanYehChe2011,
	author = {Wang, J.-N. and Yeh, J.-H. and Cheng, N. Y.-P.},
	date-added = {2022-12-07 17:21:45 +0100},
	date-modified = {2022-12-07 17:22:11 +0100},
	journal = {Journal of Banking \& Finance},
	number = {5},
	pages = {1158--1169},
	publisher = {Elsevier},
	title = {How accurate is the square-root-of-time rule in scaling tail risk: A global study},
	volume = {35},
	year = {2011}}

@techreport{EBA2018,
	date-added = {2022-12-07 17:10:28 +0100},
	date-modified = {2022-12-07 17:10:28 +0100},
	institution = {EBA},
	title = {Report results from the 2018 Market Risk Benchmarking Report},
	year = {2019}}

@techreport{EGIM,
	author = {{ECB}},
	date-added = {2022-12-07 16:59:18 +0100},
	date-modified = {2022-12-07 16:59:18 +0100},
	institution = {European Central Bank},
	title = {{ECB guide to internal models, Risk-type-specific chapters }},
	year = {2018}}

@techreport{EGICAAP,
	author = {{ECB}},
	date-added = {2022-12-02 11:46:13 +0100},
	date-modified = {2022-12-07 16:59:52 +0100},
	institution = {European Central Bank},
	title = {{ECB Guide to the internal capital adequacy assessment process}},
	year = {2018}}

@techreport{ICAAP.pr,
	author = {{ECB}},
	institution = {European Central Bank},
	title = {{ECB report on banks’ ICAAP practices}},
	year = {2020}}

@techreport{EBA2022,
	author = {EBA},
	date-added = {2022-12-02 11:45:02 +0100},
	date-modified = {2022-12-07 17:04:37 +0100},
	institution = {European Banking Authority},
	number = {EBA/REP/2022/03},
	title = {{EBA} report results from the 2021 {M}arket {R}isk benchmarking exercise},
	year = {2022}}

@techreport{PRA2018,
	author = {{PRA}},
	date-added = {2022-12-02 11:41:41 +0100},
	date-modified = {2022-12-07 17:14:04 +0100},
	institution = {PRA},
	title = {{Market risk, Supervisory Statement, SS13/13}},
	year = {2019}}

@article{DanZig2006,
	author = {Danielsson, J. and Zigrand, J. P.},
	date-added = {2022-12-02 11:32:36 +0100},
	date-modified = {2022-12-02 11:32:54 +0100},
	journal = {Journal of Banking \& Finance},
	number = {10},
	pages = {2701--2713},
	publisher = {Elsevier},
	title = {On time-scaling of risk and the square-root-of-time rule},
	volume = {30},
	year = {2006}}

@book{Dan2011,
	author = {Danielsson, J.},
	publisher = {John Wiley \& Sons},
	title = {{Financial Risk Forecasting: The Theory and Practice of Forecasting Market Risk with Implementation in R and Matlab}},
	year = {2011}}

@article{PitSch2016,
  title={Unbiased estimation of risk},
  author={Pitera, Marcin and Schmidt, Thorsten},
  journal={Journal of Banking \& Finance},
  volume={91},
  pages={133--145},
  year={2018},
  publisher={Elsevier}
}

@article{MolPit2017,
	Author = {Moldenhauer, F. and Pitera, M.},
	Date-Added = {2019-12-16 00:38:49 +0100},
	Date-Modified = {2019-12-16 00:39:02 +0100},
	Journal = {Journal of Risk},
	Number = {1},
	Title = {Backtesting Expected Shortfall: a simple recipe?},
	Volume = {22},
    pages = {17--42},
	Year = {2019}}

@article{CheMad2009,
	author = {Cherny, A. S. and Madan, D. B.},
	date-added = {2016-05-17 10:48:05 +0000},
	date-modified = {2016-05-17 10:48:05 +0000},
	journal = {The Review of Financial Studies},
	number = {7},
	owner = {Cherny A., Madan D (2009) On Measuring the Degree of Market Efficiency (Review of Financial Studies, 22, No. 7, p. 2571-260).pdf},
	pages = {2571--2606},
	timestamp = {2012.08.22},
	title = {New measures for performance evaluation},
	volume = {22},
	year = {2009}}

@article{Acerbi2014Risk,
	author = {C. Acerbi and B. Sz\'ekely},
	journal = {Risk magazine},
	number = {November},
	title = {Back-testing expected shortfall},
	year = 2014}

@article{McN1999,
	author = {McNeil, A. J.},
	date-added = {2016-01-27 00:54:53 +0000},
	date-modified = {2016-01-27 00:56:11 +0000},
	journal = {Internal {M}odelling and {CAD} {II} published by {RISK} {B}ooks},
	pages = {93--113},
	title = {Extreme value theory for risk managers},
	year = {1999}}

@techreport{Bas2011,
	author = {{BCBS - Basel Committee on Banking Supervision}},
	date-added = {2015-11-03 00:11:20 +0000},
	date-modified = {2015-11-03 01:30:02 +0000},
	institution = {Bank for International Settlements},
	title = {{Revisions to the Basel II market risk framework - updated as of 31 December 2010}},
	year = {2011}}

@techreport{Bas2006,
	author = {{BCBS - Basel Committee on Banking Supervision}},
	institution = {Bank for International Settlements},
	title = {{Basel II: International Convergence of Capital Measurement and Capital Standards: A Revised Framework - Comprehensive Version}},
	year = {2006}}

@techreport{Bas1996,
	author = {{BCBS}},
	institution = {Bank for International Settlements},
	title = {Supervisory framework for the use of 'backtesting' in conjunction with the internal models approach to market risk capital requirements},
	year = {1996}}

@book{Car2009,
	author = {Alexander, C.},
	date-added = {2015-07-09 16:00:36 +0000},
	date-modified = {2015-07-09 16:02:15 +0000},
	publisher = {John Wiley \& Sons},
	title = {{M}arket {R}isk {A}nalysis, {V}alue at {R}isk {M}odels},
	volume = {4},
	year = {2009}}

@book{Jor2007,
	author = {Jorion, P.},
	publisher = {McGraw-Hill New York},
	title = {Value at risk: the new benchmark for managing financial risk},
	volume = {2},
	year = {2007}}

@book{keener2010theoretical,
  title={Theoretical statistics: Topics for a core course},
  author={Keener, Robert W},
  year={2010},
  publisher={Springer Science \& Business Media}
}

@book{FolSch2016,
	author = {F{\"o}llmer, H. and Schied, A.},
	date-added = {2014-04-03 15:57:48 +0000},
	date-modified = {2016-03-07 21:36:40 +0000},
	edition = {4th},
	publisher = {de Gruyter Studies in Mathematics 27},
	title = {Stochastic Finance: An Introduction in Discrete Time},
	year = {2016}}

@article{PitSch2022,
  title={Estimating and backtesting risk under heavy tails},
  author={Pitera, Marcin and Schmidt, Thorsten},
  journal={Insurance: Mathematics and Economics},
  volume={104},
  pages={1--14},
  year={2022},
  publisher={Elsevier}
}

@article{BigTsa2016,
  title={Parameter uncertainty and residual estimation risk},
  author={Bignozzi, V. and Tsanakas, A.},
  journal={Journal of Risk and Insurance},
  volume={83},
  number={4},
  pages={949--978},
  year={2016},
  publisher={Wiley Online Library}
}

@article{GerTsa2011,
  title={Failure probability under parameter uncertainty},
  author={Gerrard, Russell and Tsanakas, Andreas},
  journal={Risk Analysis: An International Journal},
  volume={31},
  number={5},
  pages={727--744},
  year={2011},
  publisher={Wiley Online Library}
}

@article{BlaCaiDow2000,
  title={Extrapolating {VaR} by the square-root rule},
  author={Blake, D. and Cairns, A. and Dowd, K.},
  journal={Financial Engineering News},
  volume={1},
  number={2000},
  pages={17},
  year={2000}
}

@article{DieHicInoSch1997,
  title={Converting 1-day volatility to h-day volatility: scaling by is worse than you think},
  author={Diebold, F. X. and Hickman, A. and Inoue, A. and Schuermann, T.},
  journal={Penn Institute for Economic Research Working Papers},
  pages={97--030},
  year={1997}
}

@article{RuiHie2022,
  title={Direct versus iterated multiperiod Value-at-Risk forecasts},
  author={Ruiz, E. and Nieto, M. R.},
  journal={Journal of Economic Surveys},
  volume = {37},
  number = {3},
  pages = {915--949},
  year={2023},
  publisher={Wiley}
}

@article{Gua2022,
  title={To assess the multiperiod market risk with deep learning method taking the boosting additive quantile regression as an example},
  author={Guan, M.},
  journal={Computational Intelligence},
  volume={38},
  number={1},
  pages={216--228},
  year={2022},
  publisher={Wiley Online Library}
}

@techreport{EBA.NMRF,
	author = {{EBA}},
	institution = {European Banking Authority},
	title = {{RTS on the capitalisation of non-modellable risk factors under the FRTB}},
	year = {2020},
    number = {EBA/RTS/2020/12}}

@book{McnFreEmb2010,
	author = {McNeil, Alexander J and Frey, R{\"u}diger and Embrechts, Paul},
	date-added = {2014-07-10 14:29:58 +0000},
	date-modified = {2014-07-10 14:31:14 +0000},
	publisher = {Princeton university press},
	title = {Quantitative risk management: concepts, techniques, and tools},
	year = {2010}}

@misc{HouXueZha2023,
	author = {Hou, K. and Xue, C. and Zhang, L.},
    title= {q-factors data library},
    howpublished = {\url{https://global-q.org}},
	note = {accessed 03.04.2023},
	year = 2023}

@misc{HouXueZha2022,
	author = {Hou, K. and Xue, C. and Zhang, L.},
    title= {Technical Document: Testing Portfolios},
    howpublished = {Available at \url{https://global-q.org}},
	year = 2022}

@book{Van2000,
        place={Cambridge},
        series={Cambridge Series in Statistical and Probabilistic Mathematics},
        title={Asymptotic Statistics},
        DOI={10.1017/CBO9780511802256},
        publisher={Cambridge University Press},
        author={van der Vaart, A. W.},
        year={1998},
        collection={Cambridge Series in Statistical and Probabilistic Mathematics}}

@article{KinWag2014,
  title={Multiple-period market risk prediction under long memory: when VaR is higher than expected},
  author={Kinateder, H. and Wagner, N.},
  journal={The Journal of Risk Finance},
  volume={15},
  number={1},
  pages={4--32},
  year={2014},
  publisher={Emerald Group Publishing Limited}
}

@article{DegEmb2011,
  title={Scaling of high-quantile estimators},
  author={Degen, M. and Embrechts, P.},
  journal={Journal of Applied Probability},
  volume={48},
  number={4},
  pages={968--983},
  year={2011},
  publisher={Cambridge University Press}
}

@article{KraSchZah2014,
  title={Comparative and qualitative robustness for law-invariant risk measures},
  author={Kr{\"a}tschmer, V. and Schied, A. and Z{\"a}hle, H.},
  journal={Finance and Stochastics},
  volume={18},
  pages={271--295},
  year={2014},
  publisher={Springer}
}

@article{BarKolDij2023,
  title={Backtesting value-at-risk and expected shortfall in the presence of estimation error},
  author={Barendse, S. and Kole, E. and van Dijk, D.},
  journal={Journal of Financial Econometrics},
  volume={21},
  number={2},
  pages={528--568},
  year={2023},
  publisher={Oxford University Press}
}

@article{DuPeiWanYan2023,
  title={Powerful Backtests for Historical Simulation Expected Shortfall Models},
  author={Du, Z. and Pei, P. and Wang, X. and Yang, T.},
  journal={Journal of Business \& Economic Statistics},
  number ={Online first},
  pages={1--11},
  year={2023},
  publisher={Taylor \& Francis}
}

@article{HeKouPen2022,
  title={Risk measures: robustness, elicitability, and backtesting},
  author={He, X. D. and Kou, S. and Peng, X.},
  journal={Annual Review of Statistics and Its Application},
  volume={9},
  number={1},
  pages={141--166},
  year={2022},
  publisher={Annual Reviews}
}
 }

 \end{document}